\theoremstyle{plain}
\newtheorem{thm}{Theorem}
\newtheorem{prop}[thm]{Proposition}
\newtheorem{cor}[thm]{Corollary}
\newtheorem{lem}[thm]{Lemma}
\theoremstyle{definition}
\newtheorem{defn}[thm]{Definition}
\newtheorem{nota}{Notation}
\theoremstyle{remark}
\newtheorem{rem}{Remark}
\newcommand{\F}{\mathbb{F}}
\newcommand{\Z}{\mathbb{Z}}
\newcommand{\Fqm}{\F_{q^m}}
\newcommand{\Fqmt}{\F_{q^{mt}}}
\newcommand{\Fqmr}{\F_{q^{mr}}}
\newcommand{\Fq}{\F_q}
\newcommand{\Fqq}{\F_{q^2}}
\newcommand{\A}{\Fqm^n}
\newcommand{\fract}[2]{\hbox{\leavevmode
\kern.1em \raise .5ex \hbox{\the\scriptfont0 $#1$}\kern-.1em }\big/
\hbox{\kern-.15em \lower .25ex \hbox{\the\scriptfont0 $#2$}}
}
\DeclareMathOperator{\tr}{Tr}
\newcommand{\NA}{\textrm{N}}
\newcommand{\TA}{\tr}
\newcommand{\ev}{\textrm{ev}}
\newcommand{\evL}[1]{\ev_L \left( #1 \right)}
\newcommand{\trF}[2]{\tr_{\ \! \! {\scriptstyle #1}/{\scriptstyle #2}}}
\newcommand{\map}[4]{\left\{
    \begin{array}{ccc}
      #1 & \longrightarrow & #2 \\
      #3 & \longmapsto     & #4
    \end{array}
\right.}
\newcommand{\gop}[2]{\Gamma_q\left(#1, #2\right)}
\newcommand{\eqdef}{\stackrel{\text{def}}{=}}
\newcommand{\bset}[2]{\left\{\left. #1 ~\right|~ #2 \right\}}
\renewcommand{\leq}{\leqslant}
\renewcommand{\geq}{\geqslant} 
\title{New Identities Relating Wild Goppa Codes}
\author{Alain Couvreur\thanks{GRACE Project --- INRIA Saclay \&  LIX, CNRS UMR 7161
--- \'Ecole Polytechnique, 91120 Palaiseau Cedex, France. 
\href{mailto:alain.couvreur@lix.polytechnique.fr}{alain.couvreur@lix.polytechnique.fr}},\ \
 Ayoub Otmani\thanks{Normandie Univ, France; 
UR, LITIS, F-76821 Mont-Saint-Aignan, France.
 \href{mailto:ayoub.otmani@univ-rouen.fr}{ayoub.otmani@univ-rouen.fr}}\ \
 and\ \ Jean--Pierre Tillich\thanks{SECRET Project --- INRIA Rocquencourt,   
78153 Le Chesnay Cedex, France. 
\href{mailto:jean-pierre.tillich@inria.fr}{jean-pierre.tillich@inria.fr}}}
\begin{document}

\maketitle

\begin{abstract}
For a given support 
$L\in \Fqm^n$
and a polynomial $g\in \Fqm[x]$
with no roots in $\Fqm$, we prove 
equality between the $q$--ary Goppa codes 
$\gop{L}{N(g)} = \gop{L}{N(g)/g}$ where $N(g)$ denotes the {\em norm}
of $g$, that is
$g^{q^{m-1}+\cdots +q+1}.$
In particular, for $m=2$, that is, for a quadratic extension,
we get $\gop{L}{g^q} = \gop{L}{g^{q+1}}$.
If $g$ has roots in
$\Fqm$, then 
we do not necessarily have equality 
and we prove that the difference
of the dimensions of the two codes 
is bounded 
above by the number of distinct roots of $g$ in $\Fqm$.
These identities provide numerous code equivalences and improved designed
parameters for some families of classical Goppa codes.
\end{abstract}



\section*{Introduction}
Let $\Fqm/\Fq$ be an extension of finite fields.
Given an ordered 
$n$--tuple $L=(\alpha_1, \ldots , \alpha_n) \in \Fqm^n$
and a polynomial $G\in \Fqm[x]$
with no roots among the entries of $L$,
the {\em classical Goppa code} over $\Fq$ denoted by $\gop{L}{G}$ 
is defined 
as
$$
\gop{L}{G} \eqdef \left\{(c_1, \ldots, c_n)\in \F_q^n \ ~\left|~ \sum_{i=1}^n \frac{c_i}{x-\alpha_i} \equiv 0 \mod G(x) \right. \right\}.
$$
Since their introduction by V.~D.~Goppa in 1970 \cite{ClassGoppa},
classical Goppa codes are subject to intense study and many questions remain
open.
For instance, even if the existence of asymptotic families of Goppa codes
reaching the Gilbert--Varshamov bound is known for a long time,
no explicit construction of such a family is known.
More generally, the exact computation of the dimension and the minimum distance
of a given Goppa code remain an open problem.

\smallskip

Besides, Goppa codes are particularly appealing for
cryptographic applications.
Indeed, since the introduction of code--based cryptography by 
McEliece in 1978 \cite{McEliece78}, Goppa codes still remain
among the few families of algebraic codes which resist to any structural
attack. This is one of the reasons why
every improvement of our knowledge of these codes
is of particular interest.

\smallskip

Goppa codes form a subfamily of {\em alternant codes}, that is subfield subcodes
of {\em Generalised Reed--Solomon codes}. As alternant codes, the classical
results on the parameters of subfield subcodes provide lower bounds for
their dimension and minimum distance.
However, these bounds can be improved for some specific
Goppa codes and for a relevant choice of the Goppa 
polynomial $G$.
A major improvement of these parameters has been obtained in 1976 by Sugiyama
{\it et al.} \cite{Sugiyama_et_al} who proved that
if $g\in \Fqm [x]$ is squarefree, then,
$\gop{L}{g^{q-1}} = \gop{L}{ g^{q}}$. This equality can easily be generalised as
$\gop{L}{g^{sq-1}} = \gop{L}{g^{sq}}$ for any positive integer $s$.
This identity relating the subfield subcodes of two Generalised Reed--Solomon
codes with distinct parameters allows to take the best from each one.
Namely, the dimension of such a code is at least the designed dimension of
$\gop{L}{g^{q-1}}$ which is $n - m \deg(g) (q-1)$ and the minimum
distance is at least the designed distance of $\gop{L}{g^{q}}$ which
equals $\deg(g)q+1$.
In the binary case, this identity provides a lower bound for the minimum
distance of $\Gamma_2 (L, g)$ which is almost twice the designed distance
for alternant codes.
Some extensions of Sugiyama {\it et al.}'s result to algebraic geometry codes
are presented in \cite{oimCartier, KatsmanTsfasman, Wirtz}.
The particular subclass of Goppa codes of the form
$\gop{L}{g^{q-1}}$ for a squarefree Goppa polynomial $g$ has been called
{\em wild Goppa codes}
by Bernstein {\it et al.}~\cite{BLP_WildMcE, BLP_WildMcE_Incognito} who
proposed them
for McEliece's encryption scheme since their improved designed
parameters allowed to reduce the size of the public and secret keys for a
fixed security level.

Beside Sugiyama {\it et al.}'s results, many
improved lower bounds and exact computations
of the true parameters --- in particular the dimension --- of some particular
Goppa codes appear in the literature.
For instance (and the list is far from being exhaustive),
the authors of \cite{LoelianConan} propose a new lower bound for the
minimum distance using the discrete Fourier transform.
Improved lower bounds or exact values of the dimension of Goppa codes 
for specific families of Goppa polynomials are proved in
\cite{BezzShekDCC,  RHA, vdV, Veron, VeronDCCTraces, VeronDCCproofs}.
Many code equivalences and inclusions relating some particular binary Goppa
codes are proved in \cite{BezzShekIEEE,BezzShekArX}.
Most of these results concern Goppa codes whose Goppa polynomial or 
one of its divisors sends every entry of the support $L \in \Fqm^n$
 into a proper subfield of $\Fqm$.
 Such a feature 
induces in general the apparition of linear relations
 between the parity checks of the code when passing from the Generalised
 Reed--Solomon code to its subfield subcode, which guarantees a larger dimension
 compared to the generic estimate for subfield subcodes.
 Among the previously cited works we should point out V\'eron's examples \cite{Veron}, who studied
  Goppa codes whose Goppa polynomial
 is a {\em trace polynomial}, {\it i.e.} a polynomial $G \in \Fqm [x]$
 of the form $G= g + g^q + \cdots + g^{q^{m-1}}$ with $g\in \Fqm[x]$.
 Roughly speaking, the present article, deals with norms instead of traces.
 Namely, we consider Goppa polynomials of the form
 $g^{q^{m-1}+\cdots  +q+1}$ and prove a very 
 surprising equality: for 
 $g\in \Fqm[x]$ with no roots in $\Fqm$, we have:
 $$
 \gop{L}{g^{q^{m-1}+\cdots  +q}} = \gop{L}{g^{q^{m-1}+\cdots  +q+1}}.
 $$
 To the best of our knowledge, this article provides the first
 new general identity relating Goppa codes since 
Sugiyama {\it et al.}'s article.

 \subsection*{Results of the present article}
 Consider an extension of finite fields $\Fqm/\F_q$ with $m\geq 2$.
 Let $n$ be a positive integer and $L = (\alpha_1, \ldots, \alpha_n)$
 be an ordered $n$--tuple of pairwise
 distinct elements of $\Fqm$ and $G\in \Fqm [x]$ be a polynomial with no roots
 among the entries of $L$,
 then the classical Goppa code associated to $L$ and $G$
 over the subfield $\F_q$ is defined as:
 $$
 \gop{L}{G} \eqdef \left\{(c_1, \ldots, c_n)\in \F_q^n \ ~\left|~ \sum_{i=1}^n \frac{c_i}{x-\alpha_i} \equiv 0 \mod G(x) \right. \right\}.
 $$
 The $n$--tuple $L$ is called the {\em support} of the code.
 If $L$ contains every element of $\Fqm$, {\it i.e.} if $n = q^m$, then
 the corresponding codes are said to have a {\em full support}.
 The polynomial $G$ is called the {\em Goppa polynomial}.
 As an alternant code
 a Goppa code has a {\em designed dimension} $n-m\deg(G)$
 and a {\em designed minimum distance} $\deg(G)+1$
 (see \cite[Theorem 9.2.7]{vanLint}).
 Here we state Theorems~\ref{thm:equality} and~\ref{thm:eqbis}
 which are the main results of the present article.
 Their proofs are given in Section~\ref{sec:mainproof}.

 \begin{thm}\label{thm:equality}
   Let $g\in \Fqm [x]$ be a polynomial
   with no roots in $\Fqm$
   and $L$ be an ordered $n$--tuple of pairwise distinct elements
   of $\Fqm$.
   Then,
   \begin{equation}
     \label{eq:thmfirst}
     \gop{L}{ g^{q^{m-1} + q^{m-2}+\cdots +q}} =
     \gop{L}{ g^{q^{m-1} + q^{m-2}+\cdots +q+1}}.
   \end{equation}
 \end{thm}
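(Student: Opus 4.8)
The plan is to dualise the statement and reduce everything to a clean equality of two spaces of trace functions, the guiding idea being that the exponent $N:=q^{m-1}+\cdots+q+1=\tfrac{q^m-1}{q-1}$ is exactly the exponent of the norm $\NA_{\Fqm/\Fq}$. Set $d:=\deg g$, so that \eqref{eq:thmfirst} reads $\gop{L}{g^{N-1}}=\gop{L}{g^{N}}$. Since $g^{N-1}\mid g^{N}$, the congruence modulo $g^{N}$ is stronger, giving the inclusion $\gop{L}{g^{N}}\subseteq\gop{L}{g^{N-1}}$ for free, and it is the reverse inclusion that carries all the content. I would argue on the dual side. Recalling that a Goppa code is an alternant code, $\gop{L}{G}$ is the subfield subcode over $\Fq$ of the dual of a Generalised Reed--Solomon code with multiplier $y_i=1/G(\alpha_i)$, so by Delsarte's theorem its $\Fq$-dual is the trace code
\[
\gop{L}{G}^{\perp}=\tr_{\Fqm/\Fq}\Big(\bset{\big(y_i\,h(\alpha_i)\big)_{1\le i\le n}}{h\in\Fqm[x],\ \deg h<\deg G}\Big),\qquad y_i=\tfrac{1}{G(\alpha_i)}.
\]
Thus $\gop{L}{g^{N-1}}=\gop{L}{g^{N}}$ is equivalent to the equality of the two trace codes obtained from $G=g^{N}$ (with $\deg G=Nd$) and $G=g^{N-1}$ (with $\deg G=(N-1)d$).

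Here is where the hypothesis enters. Since $g$ has no root in $\Fqm$ we have $g(\alpha_i)\in\Fqm^{\times}$ and, crucially, $g(\alpha_i)^{N}=\NA_{\Fqm/\Fq}(g(\alpha_i))\in\Fq^{\times}$. Because $\tr_{\Fqm/\Fq}$ is $\Fq$-linear, the everywhere-nonzero $\Fq$-valued scalars $g(\alpha_i)^{-N}$ can be pulled out of the trace. This gives, on the one hand, $\tr\big(g(\alpha_i)^{-N}h(\alpha_i)\big)=g(\alpha_i)^{-N}\,\tr\big(h(\alpha_i)\big)$, and on the other hand, using $(N-1)q^{j}\equiv N-q^{j}\pmod{q^{m}-1}$,
\[
\tr\big(g(\alpha_i)^{-(N-1)}f(\alpha_i)\big)=g(\alpha_i)^{-N}\,\tr\big(g(\alpha_i)\,f(\alpha_i)\big).
\]
Consequently both dual codes are the coordinatewise product of the fixed invertible diagonal $\big(g(\alpha_i)^{-N}\big)_i$ with, respectively,
\[
B=\big\langle\,(\tr(\alpha_i^{k}))_i : 0\le k<Nd\,\big\rangle_{\Fq},\qquad S=\big\langle\,(\tr(g(\alpha_i)\,\alpha_i^{l}))_i : 0\le l<(N-1)d\,\big\rangle_{\Fq}.
\]
Since multiplying by that diagonal is a bijection, the theorem is now exactly equivalent to the equality of $\Fq$-spaces $B=S$.

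To finish, Euclidean division by $g$ of any $h$ with $\deg h<Nd$ as $h=gQ+\rho$, $\deg\rho<d$, $\deg Q<(N-1)d$, shows at once that $S\subseteq B$ and that the reverse inclusion $B\subseteq S$ reduces to proving that the $d$ vectors $(\tr(\alpha_i^{b}))_i$ with $0\le b<d$ already belong to $S$; and since the sought linear combinations depend only on $g$ and not on $L$, it suffices to establish the corresponding identity of functions on all of $\Fqm$, the general support being a restriction. This last membership is the heart of the matter and is where ``no root in $\Fqm$'' is truly indispensable: passing to the full support, a functional orthogonal to $S$ is described through the moment sequences $b_k=\sum_{\beta\in\Fqm}\psi(\beta)\beta^{k}$, which satisfy the linear recurrence with characteristic polynomial $x^{q^{m}}-x$, and the absence of roots of $g$ in $\Fqm$ is precisely the statement that $\gcd\big(g,\,x^{q^{m}}-x\big)=1$, so that convolution by $g$ acts invertibly on this recurrence space; together with the collapse $\tr(z^{q})=\tr(z)$ (the cyclotomic-coset relations among the generators), this invertibility is what forces $\dim_{\Fq}S=\dim_{\Fq}B$ and hence $B=S$. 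I expect the main obstacle to be exactly this final linear-algebra step, namely controlling how the $\Fqm$-coefficients of $g$ interact with the $\Fq$-linear trace; everything preceding it is formal, and it is also visibly the place where a root of $g$ in $\Fqm$ would break coprimality with $x^{q^{m}}-x$ and allow $B$ to be strictly larger than $S$, matching the behaviour predicted by Theorem~\ref{thm:eqbis}.
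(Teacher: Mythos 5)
Your first half is sound and is in fact the same reduction the paper itself makes (Proposition~\ref{prop:dualRef}): dualise via Delsarte, observe that $g^{e+1}(\alpha_i)=\NA_{\Fqm/\Fq}(g(\alpha_i))\in\Fq^{\times}$ is $\Fq$--valued and can therefore be pulled out of the trace, and thereby reduce \eqref{eq:thmfirst} to an equality of two spaces of trace vectors --- your $B$ and $S$ are exactly the paper's $\tau\left(\Fqm[x]_{<(e+1)t}\right)$ and $\tau\left(g\Fqm[x]_{<et}\right)$. One small repair: since the trace is only $\Fq$--linear, $B$ and $S$ must be generated by the vectors $\left(\tr(\gamma\,\alpha_i^{k})\right)_i$ with $\gamma$ running over an $\Fq$--basis of $\Fqm$, not just by $\left(\tr(\alpha_i^{k})\right)_i$; so your Euclidean-division step reduces $B\subseteq S$ to $m\deg(g)$ low-degree generators, not $\deg(g)$. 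The reduction to the full support is legitimate.

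The final step, however --- the one you yourself call ``the heart of the matter'' --- is a genuine gap rather than a proof. Encode a functional orthogonal to $S$ by its moments $b_k=\sum_{\beta}c_\beta\beta^{k}$ with $c_\beta\in\Fq$. Orthogonality to $S$ gives only the $(N-1)\deg(g)$ conditions $\sum_j g_j b_{l+j}=0$ for $0\le l<(N-1)\deg(g)$, whereas orthogonality to $B$ is the $N\deg(g)$ conditions $b_k=0$ for $0\le k<N\deg(g)$. Invertibility of convolution by $g$ on sequences annihilated by $x^{q^m}-x$ (your $\gcd\left(g,x^{q^m}-x\right)=1$ remark) would settle this only if \emph{all} convolution values were known to vanish; knowing finitely many of them cannot, by any $\Fqm$--linear argument, force strictly more vanishing conditions --- indeed, allowing $c_\beta\in\Fqm$ the implication is simply false, because $\ev_L\left(\Fqm[x]_{<(e+1)t}\right)$ and $\ev_L\left(g\Fqm[x]_{<et}\right)$ have different $\Fqm$--dimensions. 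So the Frobenius relations $b_{qk}=b_k^{q}$, which you name (``cyclotomic-coset relations'') but never actually use, must carry the entire weight of the proof; no mechanism is given for how they combine with coprimality to produce the missing $\deg(g)$ conditions, and this is precisely where the theorem's whole difficulty lives. The paper fills this hole with a concrete construction: the space $K=\left\{a^q-a \,:\, a\in\Fqm[x]_{<t}\right\}\subset\ker\tau$ of $\Fq$--dimension $mt-1$, which Hilbert 90 shows meets $g\Fqm[x]_{<et}$ trivially, plus a one-dimensional space $T=\langle\lambda a^{e+1}\rangle_{\Fq}$ with $\trF{\Fqm}{\Fq}(\lambda)=0$; the existence of a suitable $a$ (Proposition~\ref{prop:startkey}) is itself a nontrivial root-counting argument with the polynomial $Q(x)=\sum_i\lambda^{q^i}x^{R_i(q)}$, which is nonzero of degree $<q^{mr}$ precisely because $r=\deg h\geq 2$, i.e.\ because $g$ has no root in $\Fqm$ --- the exact point where, as you correctly anticipate, Theorem~\ref{thm:eqbis} takes over when roots exist. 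Nothing in your sketch substitutes for this step, so the argument as written is incomplete.
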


 This result can be combined with Sugyiama
 {\it et al.}~\cite{Sugiyama_et_al} and 
 gives the following corollary.

 \begin{cor}
     Let $L, g$ be as in Theorem~\ref{thm:equality} and 
     assume in addition that $g$ is squarefree, then
   \begin{equation}
   \label{eq:thmsec}
     \gop{L}{ g^{q^{m-1} + q^{m-2}+\cdots +q -1}} =
     \gop{L}{ g^{q^{m-1} + q^{m-2}+\cdots +q}} =
     \gop{L}{ g^{q^{m-1} + q^{m-2}+\cdots +q+1}}.
   \end{equation}
 \end{cor}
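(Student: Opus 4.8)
The plan is to treat the two equalities in~\eqref{eq:thmsec} separately. The right-hand equality
$\gop{L}{g^{q^{m-1}+\cdots+q}} = \gop{L}{g^{q^{m-1}+\cdots+q+1}}$
is precisely the statement of Theorem~\ref{thm:equality}, which applies because $g$ has no roots in $\Fqm$; no extra work is needed there. Hence the entire content of the corollary lies in establishing the left-hand equality
$\gop{L}{g^{q^{m-1}+\cdots+q-1}} = \gop{L}{g^{q^{m-1}+\cdots+q}}$,
and for this I would invoke the generalised form of Sugiyama \emph{et al.}'s identity recalled in the introduction.

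First I would factor the Goppa exponent. Writing $N = q^{m-1}+q^{m-2}+\cdots+q$, I pull out a factor of $q$ to obtain $N = q\,(q^{m-2}+\cdots+q+1) = sq$, where $s = 1 + q + \cdots + q^{m-2} = (q^{m-1}-1)/(q-1)$ is a positive integer (here the hypothesis $m\geq 2$ guarantees $s\geq 1$; when $m=2$ one recovers $s=1$ and $N=q$). With this factorisation the left-hand equality of~\eqref{eq:thmsec} reads exactly $\gop{L}{g^{sq-1}} = \gop{L}{g^{sq}}$. Since the corollary assumes $g$ squarefree, the generalised Sugiyama \emph{et al.} identity $\gop{L}{g^{sq-1}} = \gop{L}{g^{sq}}$, valid for any positive integer $s$, applies verbatim and yields this equality. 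Chaining it with Theorem~\ref{thm:equality} links all three codes and finishes the proof.

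As for obstacles: there is essentially none beyond bookkeeping, which is exactly why this is phrased as a corollary rather than a theorem. The single point requiring care is the arithmetic check that $N = q^{m-1}+\cdots+q$ genuinely factors as $sq$ with $s$ a positive integer, so that the exponent pattern $(sq-1,\,sq)$ of Sugiyama \emph{et al.} lines up with $(N-1,\,N)$. One should also record that the assumption that $g$ has no roots in $\Fqm$ (and in particular none among the entries of $L$) ensures all three Goppa codes in~\eqref{eq:thmsec} are well-defined, while squarefreeness of $g$ is precisely the hypothesis under which Sugiyama \emph{et al.}'s identity holds.
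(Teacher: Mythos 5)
Your proof is correct and follows essentially the same route as the paper: the right-hand equality is Theorem~\ref{thm:equality}, and the left-hand equality is the generalised Sugiyama \emph{et al.} identity $\gop{L}{g^{sq-1}} = \gop{L}{g^{sq}}$ applied with $s = q^{m-2}+\cdots+q+1$, exactly as the paper intends when it says the corollary results from combining Theorem~\ref{thm:equality} with \cite{Sugiyama_et_al}. Your explicit factorisation check $q^{m-1}+\cdots+q = sq$ is the only bookkeeping needed, and you have done it correctly.
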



 In addition, Theorem~\ref{thm:equality} provides improved
 designed parameters for the involved codes,
 namely, they can easily be proved to have parameters of the form:
 $$
 [n ,\ \geq n- mt(q^{m-1}+\cdots +q -1),\ \geq t(q^{m-1}+\cdots +q +1) +1],
 $$
 where $t$ denotes the degree of $g$.
 Actually, the dimension is far larger than this bound.
 Indeed, the polynomial $g^{q^{m-1}+\cdots +q+1}$
 sends every element $\alpha \in \Fqm$ on an element of $\Fq$,
 namely the norm of $g(\alpha)$. 
 In \cite{HMO'S}, the authors  prove that such alternant codes are 
 equivalent to a subfield subcode of a Reed--Solomon code, that is extended
 or shortened BCH codes.
 Furthermore, it is well--known that subfield subcodes of Reed--Solomon codes
 have a large dimension compared to subfield subcodes of random codes
 \cite{BierbrauerEdel, HMcS98, HMO'S, Liao2011}.
 For instance when $m=2$, the codes $\gop{L}{g^q}$ and
 $\gop{L}{g^{q+1}}$ are equal and have parameters of the form:
 $$
 [n,\ \geq n- 2t (q-1) + t(t-2) ,\ \geq t(q+1)+1].
 $$
 Third, we point out that compared to Sugyiama et. al.'s result \cite{Sugiyama_et_al},
 our identity (\ref{eq:thmfirst}) does not require the polynomial $g$ to be squarefree.
 This has the following interesting consequence.
 \begin{cor}\label{cor:main}
   Let $h$ be a polynomial in $\Fqm[x]$ with no roots in
   $\Fqm$ and $L$ be a support. Then, for all integer $s>0$, we have
   $$
   \gop{L}{h^{s(q^{m-1}+q^{m-2}+\cdots +q)}}
   = \gop{L}{h^{s(q^{m-1}+q^{m-2}+\cdots +q+1)}}
   $$
   and all the intermediary codes $\gop{L}{h^{s(q^{m-1}+q^{m-2}+\cdots +q)+i}}$
   for $0<i<s$ are also equal to the above codes.
 \end{cor}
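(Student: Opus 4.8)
The plan is to deduce everything from Theorem~\ref{thm:equality}, exploiting the fact that, unlike Sugiyama \textit{et al.}'s identity, it carries no squarefreeness hypothesis. Write $N = q^{m-1}+q^{m-2}+\cdots+q+1$ for the exponent occurring on the right--hand side of~(\ref{eq:thmfirst}), so that the two codes appearing there are $\gop{L}{g^{N-1}}$ and $\gop{L}{g^{N}}$. Since $h$ has no roots in $\Fqm$, the same holds for every power of $h$; in particular all the Goppa polynomials $h^j$ occurring below have no roots among the entries of $L$, so every code in the statement is well defined.

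First I would settle the two extreme codes. The polynomial $g \eqdef h^s$ has no roots in $\Fqm$, hence Theorem~\ref{thm:equality} applies verbatim to $g$ and gives
\[
\gop{L}{g^{N-1}} = \gop{L}{g^{N}}.
\]
Substituting $g = h^s$ and using $g^{N-1} = h^{s(N-1)}$ together with $g^{N} = h^{sN}$ turns this into exactly the claimed equality $\gop{L}{h^{s(N-1)}} = \gop{L}{h^{sN}}$. This is the heart of the argument, and it comes essentially for free precisely because Theorem~\ref{thm:equality} imposes no condition on $g$ beyond the absence of roots in $\Fqm$.

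It remains to squeeze the intermediary codes between these two equal extremes. Here I would invoke the elementary monotonicity of Goppa codes with respect to divisibility of the Goppa polynomial: if $A \mid B$ in $\Fqm[x]$, then any congruence modulo $B$ is in particular a congruence modulo $A$, so the defining condition of $\gop{L}{B}$ implies that of $\gop{L}{A}$, whence $\gop{L}{B} \subseteq \gop{L}{A}$. For every integer $i$ with $0<i<s$ one has $s(N-1) \leq s(N-1)+i \leq s(N-1)+s = sN$, so that $h^{s(N-1)} \mid h^{s(N-1)+i} \mid h^{sN}$, which yields the chain of inclusions
\[
\gop{L}{h^{sN}} \subseteq \gop{L}{h^{s(N-1)+i}} \subseteq \gop{L}{h^{s(N-1)}}.
\]
As the outer two codes coincide by the previous step, the middle one is forced to coincide with them as well, for every $0<i<s$. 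This proves the corollary.

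I do not expect any genuine obstacle: the whole content already sits in Theorem~\ref{thm:equality}, and the corollary is a purely formal consequence of it. The only point deserving attention is the observation that the theorem may legitimately be applied to the (generally non--squarefree) polynomial $g = h^s$, which is exactly what allows a single power of $h$ to absorb the factor $s$ and thereby generate the full chain of intermediary equalities.
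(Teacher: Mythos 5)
Your proof is correct and is essentially the paper's intended argument: the corollary is stated there precisely as a consequence of the fact that Theorem~\ref{thm:equality} requires no squarefreeness, so it may be applied to $g = h^s$, and the intermediary codes follow by the standard divisibility sandwich $\gop{L}{h^{s(e+1)}} \subseteq \gop{L}{h^{se+i}} \subseteq \gop{L}{h^{se}}$.
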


 This corollary can be also combined with Sugyiama {\it et al.}'s result
 assuming that the polynomial $h$ is squarefree, which will extend the equality
 as:
 $$
   \gop{L}{h^{s(q^{m-1}+q^{m-2}+\cdots +q)-1}}
   = \cdots  = \gop{L}{h^{s(q^{m-1}+q^{m-2}+\cdots +q+1)}}
 $$

 Finally, it is worth noting that, even if the code has not a full support,
 Theorem~\ref{thm:equality} holds true only if $g$ has no
 roots in $\Fqm$. In particular, this result is not 
 usable when the degree of $g$ is $1$.
 Nevertheless, in the general case one still has the following statement.

 \begin{thm}
   \label{thm:eqbis}
   Let $L$ be a support and $g\in \Fqm[x]$ be a polynomial with no
   roots in $L$.
   Let $r$ be the number of distinct roots of $g$ ({\it i.e.} not counted with
   multiplicity) in $\Fqm$. Then we have:
   $$
   \dim_{\Fq} \gop{L}{g^{q^{m-1}+q^{m-2}+\cdots + q}} - \dim_{\Fq}
   \gop{L}{g^{q^{m-1}+q^{m-2}+\cdots + q+1}} ~\leq~ r.
   $$
 \end{thm}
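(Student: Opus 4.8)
The plan is to peel off the rational roots from $g$ using Theorem~\ref{thm:equality} as a black box, reduce the whole estimate to a single rational root of a given multiplicity, and then settle that case by a power–sum computation whose engine is the Frobenius symmetry forced by working over $\Fq$. Throughout write $N = q^{m-1}+\cdots+q$, so the two Goppa polynomials are $g^{N}$ and $g^{N+1}$, and note that $g^{N}\mid g^{N+1}$ gives $\gop{L}{g^{N+1}}\subseteq\gop{L}{g^{N}}$.

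First I would separate the rational part. Factor $g = g_0\cdot u$ with $u=\prod_{j=1}^{r}(x-\beta_j)^{e_j}$ collecting the $r$ distinct roots $\beta_1,\dots,\beta_r\in\Fqm$ of $g$ (none of which lies in $L$) and $g_0$ having no root in $\Fqm$. Since $g_0^{a}$ and $u^{b}$ are coprime, the Chinese Remainder Theorem splits the defining congruence and yields $\gop{L}{g_0^{a}u^{b}}=\gop{L}{g_0^{a}}\cap\gop{L}{u^{b}}$. Applying this at $(a,b)=(N,N)$ and $(N+1,N+1)$ and invoking Theorem~\ref{thm:equality} for the root–free polynomial $g_0$, namely $\gop{L}{g_0^{N}}=\gop{L}{g_0^{N+1}}=:C_0$, I get $\gop{L}{g^{N}}=C_0\cap\gop{L}{u^{N}}$ and $\gop{L}{g^{N+1}}=C_0\cap\gop{L}{u^{N+1}}$. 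For nested spaces $V\subseteq W$ and any subspace $C_0$, the map $C_0\cap W\to W/V$ has kernel $C_0\cap V$, so $\dim_{\Fq}(C_0\cap W)-\dim_{\Fq}(C_0\cap V)\le\dim_{\Fq}W-\dim_{\Fq}V$. Hence it suffices to bound $\dim_{\Fq}\gop{L}{u^{N}}-\dim_{\Fq}\gop{L}{u^{N+1}}$ by $r$. Splitting $u$ by CRT once more and telescoping (raising the level of one linear factor at a time, and applying the same kernel inequality at each step) bounds this difference by $\sum_{j=1}^{r}\bigl(\dim_{\Fq}\gop{L}{(x-\beta_j)^{e_jN}}-\dim_{\Fq}\gop{L}{(x-\beta_j)^{e_j(N+1)}}\bigr)$. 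Everything thus reduces to the \emph{single--root claim}: for a root $\beta\in\Fqm$ of multiplicity $e\ge1$ (with $\beta\notin L$), the codimension of $\gop{L}{(x-\beta)^{e(N+1)}}$ in $\gop{L}{(x-\beta)^{eN}}$ is at most $1$.

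For the single--root claim, set $t=x-\beta$ and $\gamma_i=(\beta-\alpha_i)^{-1}\in\Fqm^{\times}$ (pairwise distinct). Expanding $S_c=\sum_i c_i/(x-\alpha_i)$ in $t$, the coefficient of $t^{k}$ equals $(-1)^{k}p_{k+1}(c)$, where $p_k(c):=\sum_i c_i\gamma_i^{k}$ is an $\Fq$–linear form. The condition $c\in\gop{L}{(x-\beta)^{eN}}$ reads $p_1(c)=\cdots=p_{eN}(c)=0$, and the codimension to be bounded is the $\Fq$–dimension of the image of $c\mapsto\bigl(p_{eN+1}(c),\dots,p_{e(N+1)}(c)\bigr)$ on this subspace. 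The two decisive identities are $p_{qk}(c)=p_k(c)^{q}$ (since $c_i\in\Fq$) and $\gamma_i^{q^{m}}=\gamma_i$ (since $\gamma_i\in\Fqm$). From $q(N+1)=q^{m}+N$ one computes
\[
p_{e(N+1)}(c)^{q}=p_{qe(N+1)}(c)=p_{eq^{m}+eN}(c)=p_{e(N+1)}(c),
\]
so the top coordinate $p_{e(N+1)}(c)$ always lies in $\Fq$. For an intermediate index, $q(eN+s)=eN+eq^{m}-q(e-s)$ gives $p_{eN+s}(c)^{q}=p_{eN+e-q(e-s)}(c)$; when $q\ge e$ the argument $eN+e-q(e-s)$ lies in $\{1,\dots,eN\}$ for $1\le s\le e-1$, so these coordinates vanish on the subspace. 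The increment image is then contained in $\{(0,\dots,0,\lambda):\lambda\in\Fq\}$, of $\Fq$–dimension at most $1$, which proves the claim and hence the theorem.

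The main obstacle is precisely the bookkeeping of this last step when the multiplicity $e$ exceeds $q$: the Frobenius image of an intermediate coordinate can then land \emph{among} the increment coordinates rather than in the vanishing range $\{1,\dots,eN\}$, so instead of outright vanishing one obtains a system of $q$–semilinear relations tying $p_{eN+1},\dots,p_{e(N+1)}$ together. Resolving this requires tracking the orbit of the indices under multiplication by $q$ (combined with the reduction $q^{m}\equiv 1$ coming from $\gamma_i^{q^m}=\gamma_i$) and verifying that only a one–dimensional $\Fq$–subspace of increments survives; a secondary nuisance is the degenerate regime $e(N+1)\ge q^{m}-1$, where the power sums wrap around, all $p_k$ vanish simultaneously, and both codes collapse so that the difference is $0$. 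Granting Theorem~\ref{thm:equality}, the reduction in the first two paragraphs is routine linear algebra, and all the genuine content sits in this local power–sum analysis.
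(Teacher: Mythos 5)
Your reduction to the single-root claim is sound and in fact mirrors the paper's own localization via Lemma~\ref{lem:split}, and your treatment of that claim is a genuinely different route: the paper dualizes through Lemma~\ref{lem:evgoppa} and constructs the trace-kernel space $K=\mathrm{Im}(a\mapsto a^q-a)$, a construction that handles every multiplicity uniformly, whereas you work on the primal side with Taylor coefficients at $\beta$ and Frobenius relations among power sums. The problem is that your proof of the single-root claim is only complete for multiplicity $e\le q$: you explicitly defer the case $e>q$ to an unperformed analysis of $q$-orbits of indices, and the fallback you offer in its place --- that in the regime $e(N+1)\ge q^m-1$ ``all $p_k$ vanish simultaneously and both codes collapse so that the difference is $0$'' --- is false as stated. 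Concretely, take $m=2$, $e=q-1$, $\beta\in\Fqq$, and $L$ an ordering of $\Fqq\setminus\{\beta\}$. Then $e(N+1)=q^2-1$, so your criterion declares this degenerate; yet the all-ones word lies in $\gop{L}{(x-\beta)^{q^2-q}}$ (because $\sum_{\gamma\in\Fqq^{\times}}\gamma^{j}=0$ for every $j$ with $1\le j\le q^2-q$), while $\gop{L}{(x-\beta)^{q^2-1}}=\{0\}$, so the difference is $1$, not $0$. Thus the two regimes you describe are neither correctly delimited nor both proved, and as written the proposal does not establish Theorem~\ref{thm:eqbis} for arbitrary multiplicities.

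The gap is fillable, and the fix shows your ``main obstacle'' never actually occurs. If $e\ge q+1$, then using $qN=q^m+N-q$ and $N\ge q$ (valid since $m\ge 2$) one gets $eN\ge (q+1)N=q^m+2N-q>q^m-1$; hence the integer range $\{1,\dots,eN\}$ covers every residue class modulo $q^m-1$. Since the $\gamma_i$ are pairwise distinct, nonzero, and $n\le q^m-1$, the Vandermonde matrix $(\gamma_i^{\,j})_{0\le j\le n-1}$ is invertible, forcing $\gop{L}{(x-\beta)^{eN}}=\{0\}$, so the difference is $0$ outright. Combined with your clean Frobenius argument, which is valid for all $e\le q$, every multiplicity is covered and no system of $q$-semilinear relations ever needs resolving --- the situation you flag as the hard case is exactly the situation where both codes are zero. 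But this observation is absent from the proposal, and an incorrect collapsing claim stands in its place, so the argument as submitted is incomplete precisely where the theorem's generality (arbitrary root multiplicity) requires it.
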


 Notice that in general the difference between the dimensions of $\gop{L}{g^a}$
 and $\gop{L}{g^{a+1}}$ is $m\deg (g)$. Here the difference is smaller
 than $\deg(g)$
 and is not multiplied by $m$. Thus, the difference is small compared to
 the general case.
 This statement is of interest, since, using the very same argument
 as above, one can prove using \cite{HMO'S} that, if the support $L$ is full
 ({\it i.e.} $n = q^m$), then
 $\gop{L}{g^{q^{m-1}+q^{m-2}+\cdots + q+1}}$ is a subfield subcode of a 
 Reed--Solomon code and hence has a dimension larger than the designed dimension
 for general alternant codes.
 By this manner, Theorem~\ref{thm:eqbis} provides an improved lower bound for
 the dimension of codes $\gop{L}{g^{q^{m-1}+q^{m-2}+\cdots + q}}$, where $g$ has degree $1$.


 \subsection*{Outline of the article}
 This article is organised as follows.
 Elementary properties of Goppa codes are recalled and discussed in
 Section~\ref{sec:props}. The particular case of Goppa codes whose Goppa
 polynomial sends every entry of $L$ into a proper subfield of $\Fqm$
 is discussed
 in Section~\ref{sec:proof1}. Section~\ref{sec:mainproof} is devoted to the 
 proofs of the main results of the article, namely Theorems~\ref{thm:equality} 
 and \ref{thm:eqbis}.
 Finally, some numerical examples illustrating our results are presented
 in Section~\ref{sec:examples}.

 \section{Some Well Known Properties of Goppa Codes}
 \label{sec:props}



 \begin{nota}
  For a given support $L = (\alpha_1, \ldots , \alpha_n)$ of pairwise
  distinct elements of $\Fqm$,
  we denote by $\pi_L$ the polynomial
  $$\pi_L\eqdef \prod_{i=1}^n (x-\alpha_i).$$
  We denote by $\pi_L'$ its first derivative.
  Finally, for a positive integer $a$, we denote by $\Fqm[x]_{<a}$
  the subspace of $\Fqm[x]$ of polynomials of degree less than $a$.
 \end{nota}


 Recall that $q$-ary Goppa codes are alternant codes, {\it i.e.} subfield subcodes over $\F_q$ of a Generalised Reed--Solomon (GRS) code over
 $\Fqm$. Therefore, 
 from Delsarte's Theorem \cite[Theorem 7.7.11]{slmc1}, the dual of the Goppa code, is the trace of a GRS
 code.

 \begin{lem}\label{lem:evgoppa}
   Let $L$ be a support and $h\in \Fqm [x]$
   with no roots in $L$
   and $C, C^{\bot}$ be the GRS codes defined by
   \begin{align*}
   C  \eqdef & \left\{ \left. \left(\frac{h (\alpha_1) f(\alpha_1)}{\pi'_L(\alpha_1)},\ldots , 
     \frac{h (\alpha_n) f(\alpha_n)}{\pi'_L(\alpha_n)}\right)\ \right| \ 
     \alpha_i \in L,\ f\in \F_{q^m}[x]_{<n-t}   \right\}; \\
   C^{\bot}  = &
   \left\{ \left. \left(\frac{ f(\alpha_1)}{h(\alpha_1)},\ldots ,
     \frac{ f(\alpha_n)}{h(\alpha_n)}\right)\ \right| \ 
     \alpha_i \in L,\ f\in \F_{q^m}[x]_{<t}  \right\}.
   \end{align*}
   Then, $\gop{L}{ h} = C_{|\Fq}$ and $\gop{L}{ h}^{\bot} =
   \tr (C^{\bot})$ where $\tr$ denotes the map
   $\trF{\Fqm}{\Fq}$ applied component-wise.
 \end{lem}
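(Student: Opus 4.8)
The plan is to reduce both assertions to the single identity $\gop{L}{h} = C_{|\Fq}$, and then to obtain the statement about the dual for free. Write $t = \deg(h)$. Observe first that $C$ and $C^{\bot}$, as defined, are generalised Reed--Solomon codes of respective dimensions $n-t$ and $t$, with multiplier vectors $\big(h(\alpha_i)/\pi_L'(\alpha_i)\big)_{1\le i\le n}$ and $\big(1/h(\alpha_i)\big)_{1\le i\le n}$. I would begin by checking that $C^{\bot}$ is genuinely the $\Fqm$--dual of $C$: this is precisely the classical description of the dual of a GRS code (see e.g. \cite{vanLint}), which turns the multiplier $v_i$ of a code of dimension $k$ into the multiplier $1/\!\left(v_i \pi_L'(\alpha_i)\right)$ of its dual of dimension $n-k$; with $v_i = h(\alpha_i)/\pi_L'(\alpha_i)$ this dual multiplier is exactly $1/h(\alpha_i)$, as required. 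Granting this, the second assertion follows from the first: applying Delsarte's Theorem \cite[Theorem 7.7.11]{slmc1} to $D=C$ gives $\big(C_{|\Fq}\big)^{\bot} = \tr\big(C^{\bot}\big)$, whence $\gop{L}{h}^{\bot} = \tr\big(C^{\bot}\big)$ once $\gop{L}{h} = C_{|\Fq}$ is established.

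It remains to prove $\gop{L}{h} = C_{|\Fq}$. Since $h$ has no root among the entries of $L$, each $x-\alpha_i$ is invertible modulo $h$, and a direct verification gives the congruence
$$
\frac{1}{x-\alpha_i} \equiv -\frac{1}{h(\alpha_i)}\cdot\frac{h(x)-h(\alpha_i)}{x-\alpha_i} \pmod{h(x)},
$$
whose right--hand side is an honest polynomial of degree $t-1$ because $\alpha_i$ is a root of $h(x)-h(\alpha_i)$. Consequently, for $c=(c_1,\dots,c_n)\in\Fq^n$ the syndrome $\sum_i c_i/(x-\alpha_i)$ is congruent modulo $h$ to the polynomial $P_c(x)\eqdef -\sum_{i} \frac{c_i}{h(\alpha_i)}\cdot\frac{h(x)-h(\alpha_i)}{x-\alpha_i}$, which has degree $<t=\deg(h)$. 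Hence the defining congruence of the Goppa code collapses to the genuine polynomial identity $P_c=0$.

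The last step is to read off $P_c=0$ in coordinates. Expanding $h$ in the monomial basis and collecting the coefficients of $P_c$ yields a triangular linear system in the quantities $S_j\eqdef\sum_i \frac{c_i}{h(\alpha_i)}\,\alpha_i^{\,j}$; since the leading coefficient of $h$ is nonzero, $P_c=0$ is equivalent to $S_0=\cdots=S_{t-1}=0$. But $S_j=\big\langle c,\,(\alpha_i^{\,j}/h(\alpha_i))_{i}\big\rangle$, and the vectors $(\alpha_i^{\,j}/h(\alpha_i))_{i}$ for $0\le j<t$ generate $C^{\bot}$; therefore $c\in\gop{L}{h}$ if and only if $c\in\Fq^n$ is orthogonal to $C^{\bot}$, i.e. if and only if $c\in (C^{\bot})^{\bot}\cap\Fq^n = C_{|\Fq}$, using $(C^{\bot})^{\bot}=C$ from the first paragraph. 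This proves the first identity, and with it the second.

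I expect the only genuinely delicate points to be bookkeeping ones: that the modular inverse is represented by a polynomial of degree $t-1$, so that the congruence becomes a true polynomial equality once the syndrome has degree $<t$, and that the nonvanishing of the leading coefficient of $h$ makes the system in the $S_j$ triangular, hence equivalent to the vanishing of all the $S_j$. The GRS duality and Delsarte's Theorem are invoked as black boxes, and the dimension count $\dim C+\dim C^{\bot}=n$ is automatically consistent with the construction.
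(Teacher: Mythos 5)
Your proof is correct and takes essentially the same route as the paper: the paper's own proof of this lemma is simply a citation of \cite[Theorems 12.4 and 12.5]{slmc1}, and your argument --- the congruence $\frac{1}{x-\alpha_i}\equiv -\frac{1}{h(\alpha_i)}\cdot\frac{h(x)-h(\alpha_i)}{x-\alpha_i} \pmod{h(x)}$, the resulting triangular system in the syndromes $S_j$, the classical GRS duality formula, and Delsarte's theorem --- is precisely the standard proof of those cited theorems, written out in full and valid verbatim in the $q$--ary case. No gaps.
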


 \begin{proof}
  See \cite[Theorems 12.4 and 12.5]{slmc1} for binary Goppa codes.
  The $q$--ary case is obtained using the very same proof.
 \end{proof}


 The following elementary lemma is useful
 in what follows.


 \begin{lem}\label{lem:split}
   Let $L = (\alpha_1, \ldots , \alpha_n)$ be a support and
   $g,h \in \Fqm[x]$ be two relatively prime polynomials such that
   both have no roots among the entries of $L$.
   Set $n-a = \dim_{\Fq} \gop{L}{g}$ and $n-b = \dim_{\Fq} \gop{L}{h}$  
   Then,
   \begin{align}
     \label{eq:intersect}
     \gop{L}{gh} & =  \gop{L}{g} \cap \gop{L}{h};\\
     \label{eq:dim_prod}
     \dim_{\Fq} \gop{L}{gh} &\geq  n-a-b.
   \end{align}
 \end{lem}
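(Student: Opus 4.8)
The plan is to prove the two assertions of Lemma~\ref{lem:split} separately, starting with the set-theoretic identity~\eqref{eq:intersect} and then deducing the dimension bound~\eqref{eq:dim_prod} from it. For the equality of codes, I would work directly from the defining congruence condition $\sum_i c_i/(x-\alpha_i) \equiv 0 \pmod{G}$. The key observation is that for a vector $(c_1,\ldots,c_n) \in \Fq^n$ whose entries avoid the roots of $G$ among the $\alpha_i$, the rational function $R(x) = \sum_i c_i/(x-\alpha_i)$ reduces to a single fixed element of $\Fqm[x]$ modulo any divisor of $G$; what varies is only which modulus we impose. So membership in $\gop{L}{gh}$ says $R \equiv 0 \pmod{gh}$, while membership in $\gop{L}{g} \cap \gop{L}{h}$ says $R \equiv 0 \pmod g$ \emph{and} $R \equiv 0 \pmod h$.

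First I would make precise that $R(x)$ is, modulo any $G$ with $\gcd(G,\pi_L)=1$, equal to a genuine polynomial: write $R(x) = N_c(x)/\pi_L(x)$ where $N_c(x) = \sum_i c_i \prod_{j\neq i}(x-\alpha_j) \in \Fqm[x]$, and since $\pi_L$ is invertible modulo $G$, the congruence $R \equiv 0 \pmod G$ is equivalent to $N_c \equiv 0 \pmod G$. The equality~\eqref{eq:intersect} then follows from the elementary fact that, for relatively prime $g$ and $h$, a polynomial $N_c$ is divisible by $gh$ if and only if it is divisible by both $g$ and $h$ (this is exactly the coprimality hypothesis doing its work, via unique factorisation or the Chinese Remainder Theorem in $\Fqm[x]$). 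I would state this as the crux: $g \mid N_c$ and $h \mid N_c$ with $\gcd(g,h)=1$ force $gh \mid N_c$, and conversely divisibility by $gh$ trivially gives both.

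For the dimension bound~\eqref{eq:dim_prod}, I would invoke the standard inequality for the dimension of an intersection of two subspaces of the ambient space $\Fq^n$. Since $\gop{L}{gh} = \gop{L}{g} \cap \gop{L}{h}$ by the first part, and $\dim U \cap V \geq \dim U + \dim V - n$ for any two subspaces $U,V \subseteq \Fq^n$, substituting $\dim_{\Fq}\gop{L}{g} = n-a$ and $\dim_{\Fq}\gop{L}{h} = n-b$ gives
$$
\dim_{\Fq}\gop{L}{gh} \geq (n-a) + (n-b) - n = n - a - b,
$$
as required.

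The routine verifications are the rewriting of $R$ as $N_c/\pi_L$ and the invertibility of $\pi_L$ modulo $G$, which are immediate from $\gcd(G,\pi_L)=1$. The only place where the coprimality of $g$ and $h$ is genuinely essential is the forward direction of~\eqref{eq:intersect}: without it, simultaneous divisibility by $g$ and $h$ would not yield divisibility by the product. I do not anticipate a real obstacle here, since everything reduces to elementary divisibility in the polynomial ring $\Fqm[x]$ together with the dimension formula for an intersection; the main point to state carefully is simply the passage from the rational-function congruence to an honest divisibility statement about the polynomial $N_c$.
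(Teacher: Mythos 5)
Your proof is correct and follows essentially the same route as the paper: the identity \eqref{eq:intersect} rests on coprimality via the Chinese Remainder Theorem (the paper works in the localised ring $\Fqm[x,\frac{1}{\pi_L}]$, where you instead clear the denominator and argue about divisibility of the numerator $N_c$ in $\Fqm[x]$ --- the same argument in different clothing). The only minor difference is in deducing \eqref{eq:dim_prod}: the paper passes to duals, using $\gop{L}{gh}^{\bot} = \gop{L}{g}^{\bot} + \gop{L}{h}^{\bot}$, whereas you apply the inequality $\dim_{\Fq}(U \cap V) \geq \dim_{\Fq} U + \dim_{\Fq} V - n$ directly; these are dual formulations of the same linear-algebra fact, so both are equally valid.
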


 \begin{proof}
   The Chinese remainder Theorem in the ring
   $\Fqm[x,\frac{1}{\pi_L}]$
   asserts that
   $$
   \sum_{i=1}^n \frac{c_i}{x-\alpha_i} \equiv 0 \mod (gh) \quad
   \Longleftrightarrow \quad
   \left\{ \begin{aligned}
   \sum_{i=1}^n \frac{c_i}{x-\alpha_i} &\equiv 0 \mod (g)\\
   \sum_{i=1}^n \frac{c_i}{x-\alpha_i} &\equiv 0 \mod (h).
   \end{aligned}
   \right.
   $$
   This yields~(\ref{eq:intersect})
   and implies that
   $\gop{L}{gh}^{\bot} = \gop{L}{g}^{\bot} + \gop{L}{h}^{\bot}$,
   which gives (\ref{eq:dim_prod}).
 \end{proof}

 \begin{rem}
   Let $L , g$ be as in Theorem~\ref{thm:equality} and
   $h\in \Fqm[x]$ be a polynomial prime to $g$
   and with no roots in $L$. Then, Theorem~\ref{thm:equality}
   generalises as:
   $$
   \gop{L}{hg^{q^{m-1}+\cdots +q-1}} =  \gop{L}{hg^{q^{m-1}+\cdots +q}}
   =  \gop{L}{hg^{q^{m-1+\cdots +q+1}}}.
   $$
   The Goppa codes described above for $m=2$
   are proposed for cryptographic applications
   in \cite{BLP_WildMcE_Incognito}.
 \end{rem} 


 \begin{lem}\label{lem:supports}
   Let $L, L'$ be two supports such that $L$ can be obtained from
   $L'$ by removing some entries without changing the ordering.
   Let $g\in \Fqm[x]$ be a polynomial with no roots in $L'$ (and hence in $L$),
   then $\gop{L}{g}$ is equal to the shortening of $\gop{L'}{g}$
   on $L$.
 \end{lem}

 \begin{proof}
   This follows immediately by viewing
   $
   \sum_{\alpha \in L} \frac{c_{\alpha}}{x-\alpha}
   $
   as the sum
   $
   \sum_{\beta \in L'} \frac{c_{\beta}}{x-\beta}$ 
   such that $c_{\beta}=0$ for all $\beta \in L'\setminus L$.
 \end{proof}

 \section{Goppa Codes and Subfield Subcodes of Reed--Solomon Codes}\label{sec:proof1}

 \begin{defn}[Diagonal equivalence]\label{def:streq}
   Let $C, C'$ be two codes in $\Fq^n$.
   $C, C'$ are 
   said to be {\em diagonally equivalent}
   and 
   we write 
   $C\sim_{\Fq} C'$ in this case,
   if $C'$ is the image of $C$ by a Hamming isometry of $\Fq^n$ of the form:
   $$
   \map{\Fq^n}{\Fq^n}{(x_1, \ldots , x_n)}{(u_1 x_1, \ldots , u_n x_n)},
   $$
   where the $u_i$'s are all in $\Fq^{\times}$. 
   This choice of terminology comes from the fact that the codes can be sent
   onto each other using 
   an invertible diagonal matrix.
 \end{defn}
 Here, we reformulate for our purpose some results stated in \cite{HMO'S}.

 \begin{prop}\label{prop:equiv}
   Let $L$ be an $n$--tuple of pairwise distinct elements of 
   $\Fqm$ and $g,h$ be two polynomials in $\Fqm[x]$ which have no roots 
   among the entries of $L$
   (but possibly elsewhere in 
   $\Fqm$),
   such that $\deg(g)=\deg(h)$,
   then the codes $\gop{L}{g^{q^{m-1}+\cdots +q+1}}$ and
   $\gop{L}{h^{q^{m-1}+\cdots +q+1}}$ are diagonally equivalent.
 \end{prop}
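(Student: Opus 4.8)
The plan is to realise each Goppa code as the subfield subcode of an explicit GRS code via Lemma~\ref{lem:evgoppa}, and then to exhibit a diagonal isometry \emph{with coefficients in $\Fq^\times$} between the two GRS codes. The crucial observation is that the exponent $N \eqdef q^{m-1}+\cdots+q+1 = \frac{q^m-1}{q-1}$ is exactly the exponent of the norm map $\Fqm \to \Fq$; hence for every $\alpha_i \in L$ the quantity $g(\alpha_i)^N$ is the norm of the element $g(\alpha_i)$, which is nonzero since $g$ has no roots in $L$, so $g(\alpha_i)^N \in \Fq^\times$. The same holds with $h$ in place of $g$.

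First I would apply Lemma~\ref{lem:evgoppa} with Goppa polynomial $g^N$, of degree $N\deg(g)$, to write $\gop{L}{g^N} = (C_g)_{|\Fq}$, where $C_g$ is the GRS code spanned by the evaluation vectors $(g(\alpha_i)^N f(\alpha_i)/\pi'_L(\alpha_i))_i$ for $f \in \Fqm[x]_{<n-N\deg(g)}$. Doing the same for $h$ yields $\gop{L}{h^N} = (C_h)_{|\Fq}$, with the factor $h(\alpha_i)^N$ in place of $g(\alpha_i)^N$. Since $\deg(g)=\deg(h)$, the degree bound $n-N\deg(g) = n-N\deg(h)$ is identical, so $C_g$ and $C_h$ differ only by their diagonal prefactors: both arise from the common GRS code $C_0 = \{(f(\alpha_i)/\pi'_L(\alpha_i))_i \mid f \in \Fqm[x]_{<n-N\deg(g)}\}$ by componentwise multiplication, by $g(\alpha_i)^N$ and $h(\alpha_i)^N$ respectively.

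Next I would set $w_i \eqdef (h(\alpha_i)/g(\alpha_i))^N$ for each $i$. As $g,h$ have no roots in $L$, each $g(\alpha_i),h(\alpha_i)$ is nonzero, and since $w_i$ is again a norm (that of $h(\alpha_i)/g(\alpha_i)$) we have $w_i \in \Fq^\times$. Componentwise multiplication by $(w_i)_i$ turns the prefactor $g(\alpha_i)^N$ into $h(\alpha_i)^N$, hence maps $C_g$ bijectively onto $C_h$; it is thus an $\Fq$-diagonal Hamming isometry of $\Fqm^n$ carrying $C_g$ to $C_h$. Because the $w_i$ lie in $\Fq^\times$, this diagonal map stabilises $\Fq^n$, so it restricts to an isometry between the subfield subcodes, sending $C_g \cap \Fq^n = \gop{L}{g^N}$ to $C_h \cap \Fq^n = \gop{L}{h^N}$; this is precisely the diagonal equivalence of Definition~\ref{def:streq}.

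The only real subtlety, and the step one must not overlook, is checking that the scalars $w_i$ genuinely lie in $\Fq$ rather than merely in $\Fqm$: an arbitrary equivalence of GRS codes over $\Fqm$ need not descend to their $\Fq$-subfield subcodes. What makes the argument work is that the norm exponent $N$ forces $g(\alpha_i)^N$, $h(\alpha_i)^N$, and hence $w_i$, into $\Fq^\times$, so the equivalence of the GRS codes is already defined over $\Fq$ and therefore passes to the subfield subcodes unchanged.
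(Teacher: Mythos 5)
Your proof is correct, and it is more self-contained than the one in the paper. The paper's proof consists of your opening observation only --- namely that $g^{q^{m-1}+\cdots+q+1}(\alpha_i)=\NA_{\Fqm/\Fq}(g(\alpha_i))$, and likewise for $h$, so that both Goppa polynomials send every entry of $L$ into $\Fq$ --- and then concludes by citing Proposition~3.5 of \cite{HMO'S} as a black box. You instead prove the equivalence directly: writing $N=q^{m-1}+\cdots+q+1$, you use Lemma~\ref{lem:evgoppa} to realise both codes as subfield subcodes of GRS codes that differ only in the diagonal prefactors $g(\alpha_i)^{N}$ and $h(\alpha_i)^{N}$ (the hypothesis $\deg g=\deg h$ guaranteeing that the same polynomial space $\Fqm[x]_{<n-N\deg(g)}$ appears in both), and you observe that the connecting multipliers $w_i=\bigl(h(\alpha_i)/g(\alpha_i)\bigr)^{N}$ are norms of nonzero elements of $\Fqm$, hence lie in $\Fq^{\times}$, so that the diagonal map stabilises $\Fq^n$ and carries $C_g\cap\Fq^n$ onto $C_h\cap\Fq^n$; this is exactly the diagonal equivalence of Definition~\ref{def:streq}. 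Your write-up also correctly isolates the one delicate point: an equivalence of GRS codes over $\Fqm$ need not descend to subfield subcodes, and it does here only because the multipliers already lie in $\Fq^{\times}$. What the paper's route buys is brevity and the greater generality of the cited result (any two Goppa polynomials of equal degree mapping the support into $\Fq$ give diagonally equivalent codes), which the paper reuses elsewhere, e.g.\ in its remark on V\'eron's trace polynomials; note, however, that your argument with $w_i=G_2(\alpha_i)/G_1(\alpha_i)$ proves that more general statement just as well.
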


 \begin{proof}
   For all $\alpha \in L$, 
   note that
   $g^{q^{m-1}+\cdots +q+1}(\alpha)$ (resp. $h^{q^{m-1}+\cdots +q+1}(\alpha)$)
   is nothing but $\textrm{N}_{\Fqm/\Fq}(g(\alpha))$
   (resp. $\textrm{N}_{\Fqm/\Fq}(h(\alpha))$) and hence is in $\Fq$.
   Thus both Goppa polynomials send every entry of $L$ into $\Fq$.
   One concludes
  using \cite[Proposition 3.5]{HMO'S}.
\end{proof}

\begin{rem}
  The result stated in \cite{HMO'S} concerns codes on a support avoiding $0$.
  However, their result extends straightforwardly to a full support. 
\end{rem}

\begin{cor}\label{cor:equivalence}
  Let $g$ be a polynomial in $\Fqm [x]$ with no roots in $\Fqm$
  and $L_0$ be a ``full support'', {\it i.e.}
  an ordered $q^m$--tuple containing all the elements
  of $\Fqm$, then we have the diagonal equivalence of codes
  $$
    \gop{L_0}{ g^{q^{m-1} + q^{m-2}+\cdots +q+1}}
    \sim_{\Fq} RS_{k}(L_0)_{|\F_q},
  $$
  where $RS_{k}(L_0)_{|\F_q}$
  denotes the subfield subcode 
  of the Reed--Solomon code over $\Fqm$ of dimension
  $k= q^m- \deg(g)(q^{m-1}+q^{m-2}+\cdots +q+1)$ with full support.
  In particular the diagonal equivalence class of this code depends only
  on the degree of $g$.
\end{cor}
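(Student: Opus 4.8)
The plan is to make the diagonal equivalence explicit by passing through the GRS description of the code and exploiting the full support. Throughout I write $N = q^{m-1}+\cdots+q+1$ for brevity. First I would apply Lemma~\ref{lem:evgoppa} with the polynomial $h = g^N$, which has no roots in $L_0$ precisely because $g$ has no roots in $\Fqm$. This presents $\gop{L_0}{g^N}$ as the subfield subcode $C_{|\Fq}$, where $C$ is the GRS code whose $i$-th coordinate multiplier is $g^N(\alpha_i)/\pi'_{L_0}(\alpha_i)$ and whose evaluation polynomials $f$ range over $\Fqm[x]_{<q^m - N\deg(g)}$.

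The next step exploits that $L_0$ lists every element of $\Fqm$. In that case $\pi_{L_0}(x) = \prod_{\alpha \in \Fqm}(x - \alpha) = x^{q^m} - x$, so $\pi'_{L_0}(x) = -1$ identically in characteristic $p$, and the multipliers simplify to $-g^N(\alpha_i)$. The crucial observation is that $g^N(\alpha_i) = \NA_{\Fqm/\Fq}(g(\alpha_i))$ lies in $\Fq$, and is nonzero because $g$ has no roots in $\Fqm$. Hence the diagonal matrix $D = \mathrm{diag}(-g^N(\alpha_1), \ldots, -g^N(\alpha_{q^m}))$ has all its entries in $\Fq^{\times}$. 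Setting $k = q^m - N\deg(g)$, this identifies $C$ with $D \cdot RS_k(L_0)$ as codes over $\Fqm$, where $RS_k(L_0)$ is the full-support Reed--Solomon code of dimension $k$.

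Finally I would descend to the subfield subcode. Because $D$ has entries in $\Fq^{\times}$, a vector $v \in \Fq^n$ lies in $(D \cdot RS_k(L_0))_{|\Fq}$ exactly when $D^{-1}v \in RS_k(L_0)$, and $D^{-1}v$ already lies in $\Fq^n$; therefore $(D \cdot RS_k(L_0))_{|\Fq} = D \cdot (RS_k(L_0)_{|\Fq})$. This yields $\gop{L_0}{g^N} = C_{|\Fq} = D \cdot (RS_k(L_0)_{|\Fq})$, which is precisely a diagonal equivalence $\gop{L_0}{g^N} \sim_{\Fq} RS_k(L_0)_{|\Fq}$. The closing claim that the equivalence class depends only on $\deg(g)$ is then immediate since $k$ depends only on $\deg(g)$, and it also follows directly from Proposition~\ref{prop:equiv}.

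I expect the two delicate points to be the commutation of the subfield-subcode operation with the diagonal scaling, which genuinely requires the multipliers to lie in $\Fq$ and is exactly what the norm structure of $g^N$ guarantees, together with the identity $\pi'_{L_0} \equiv -1$, which is the computation that makes the full-support hypothesis pay off. Both are elementary, so the main content is really in arranging them so that the hypotheses line up rather than in any hard estimate.
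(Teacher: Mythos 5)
Your proof is correct, and it takes a genuinely different route from the paper: the paper disposes of this corollary with a bare citation (``See \cite[\S 3]{HMO'S}''), just as it invokes \cite[Proposition 3.5]{HMO'S} for the closely related Proposition~\ref{prop:equiv}, whereas you reconstruct the equivalence explicitly from the paper's own toolkit. Your chain --- Lemma~\ref{lem:evgoppa} applied to $h=g^{q^{m-1}+\cdots+q+1}$, the full-support identity $\pi_{L_0}(x)=x^{q^m}-x$ hence $\pi'_{L_0}\equiv -1$, the observation that the resulting multipliers $-g^{q^{m-1}+\cdots+q+1}(\alpha_i)=-\NA_{\Fqm/\Fq}(g(\alpha_i))$ lie in $\Fq^{\times}$ precisely because $g$ has no roots in $\Fqm$, and the commutation $(D\cdot RS_k(L_0))_{|\Fq}=D\cdot\bigl(RS_k(L_0)_{|\Fq}\bigr)$ for a diagonal matrix $D$ with entries in $\Fq^{\times}$ --- is sound at every link, and the last two steps are exactly the points where the norm structure and the full-support hypothesis are genuinely needed, as you say. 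Your argument even sidesteps a wrinkle the paper must handle in a remark: the results of \cite{HMO'S} are stated for supports avoiding $0$, so the paper has to note that they extend to a full support, whereas your computation works directly on $L_0=\Fqm$ with no adjustment. What the citation buys the paper is generality: the machinery of \cite{HMO'S} applies to any Goppa polynomial mapping the support into $\Fq$ (e.g.\ trace polynomials, as in the remark following the corollary) and also supplies the dimension formula exploited later in Theorems~\ref{thm:ttmoins2} and~\ref{thm:dimcub}. What your proof buys is transparency and independence from the external reference for this particular statement.
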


\begin{proof}
See \cite[\S 3]{HMO'S}.
\end{proof}



It is worth noting that a subfield subcode of a full support Reed--Solomon
code is nothing but an extended BCH code.
In \cite[Theorem 4.4]{HMO'S}, the authors give a formula for the dimension
of such codes involving the number and the size of some cyclotomic classes.
See Section~\ref{sec:examples} for further discussion.

\begin{rem}
  Corollary~\ref{cor:equivalence} holds for every Goppa Polynomial
  sending every entry of the support into $\Fq$.
  In particular, this gives another interpretation of 
  V\'eron's results~\cite{Veron} showing
  that the dimension of Goppa codes with a Goppa polynomial
  of the form $g+g^q+\cdots + g^{q^{m-1}}$ exceeds the
  generic bound for alternant codes.
  Indeed, since
  such a Goppa polynomial sends every entry of the support into $\Fq$,
  the 
  corresponding Goppa code is $\Fq$--equivalent to a BCH code.
\end{rem}

\section{Proof of Theorems~\ref{thm:equality}
and~\ref{thm:eqbis}}
\label{sec:mainproof}

\subsection{Notation}\label{subsec:nota}
  In what follows we frequently consider 
  $\Fqm^n$ and $\Fq^n$
  as rings for their canonical product ring structure.
  The component-wise product of two $n$--tuples $a, b$ in $\Fqm^n$
  is denoted by
  $$a\star b \eqdef (a_1 b_1, \ldots , a_n b_n).$$
  We also allow ourselves the notation 
  $a^s$ to denote the component-wise $s$--th power 
  and $1/a$ for the component-wise inverse when $a$ is in
  $(\Fqm^{\times})^n$.
  Recall that 
$\TA : \Fqm^n \rightarrow \Fq^n$
  is the component-wise trace map. 
  In the same manner, 
$\NA : \Fqm^n \rightarrow \Fq^n$ is the component-wise norm map.
  Furthermore, we denote by $\ev_L$ the 
  \emph{evaluation} function:
  \begin{equation}
    \label{eq:evL}
        \ev_L : \map{\Fqm[x]}{\A}{f}{\left(f(\alpha_1), \ldots , f(\alpha_n)\right)}.
  \end{equation}
  This turns out to be a ring homomorphism. 
  We also need to introduce the map $\tau$ defined as the 
  composition of $\ev_L$ and $\TA$, namely:
  \begin{equation}
    \label{eq:eta}
   \tau : \map{\Fqm[x]}{\Fq^n}{f}{\TA (\ev_L(f))}.
  \end{equation}
  
   Finally, for convenience, we denote by $e$ the integer
    $$
      e\eqdef q^{m-1} + q^{m-2} +\cdots + q.
    $$

\subsection{Preliminaries} \label{subsec:outline}


\subsubsection{Local reformulation}
Thanks to Lemma~\ref{lem:split}, 
one can assume that
$g$ is a power $h^s$ 
of an irreducible polynomial
$h \in \Fqm[x]$ with $s \geq 1$.
Hence we are reduced to prove the following statement.
\begin{thm}[Local version of Theorems~\ref{thm:equality} and \ref{thm:eqbis}]\label{thm:irred}
  Let $L\in \Fqm^n$ be a support and $g\in \Fqm[x]$ be a
  polynomial of degree $t$
  which is either irreducible or a power of an irreducible polynomial.
   Only two cases can occur:
  \begin{enumerate}[(i)]
  \item\label{item:thmeq} $g$ has no roots in $\Fqm$
    then $\gop{L}{g^{e}} = \gop{L}{g^{e+1}}$;
  \item\label{item:thmeqbis} or $g = (x- \rho)^t$  
    for some $t\geq 1$ and some $\rho \in \Fqm$ which
    does not appear in the entries of $L$,
    then: 
\[
\dim_{\Fq} \gop{L}{g^{e}} - \dim_{\Fq}\gop{L}{g^{e+1}} \leq 1.
\] 
  \end{enumerate}
\end{thm}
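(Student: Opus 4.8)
The plan is to work with the dual codes via Lemma~\ref{lem:evgoppa}, which expresses $\gop{L}{h}^{\bot} = \tau\bigl(C^{\bot}_h\bigr)$ where $C^{\bot}_h$ is spanned by the evaluations $\ev_L(f/h)$ for $f \in \Fqm[x]_{<\deg h}$. Since $\dim_{\Fq}\gop{L}{g^{e}} - \dim_{\Fq}\gop{L}{g^{e+1}} = \dim_{\Fq}\gop{L}{g^{e+1}}^{\bot} - \dim_{\Fq}\gop{L}{g^{e}}^{\bot}$, bounding the dimension gap above by $1$ is equivalent to showing that the dual of $\gop{L}{g^{e+1}}$ exceeds the dual of $\gop{L}{g^{e}}$ by at most one dimension. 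Because $\gop{L}{g^{e}} \supseteq \gop{L}{g^{e+1}}$, the dual codes are nested the other way, so the task is to exhibit a single vector (over $\Fq$) that, together with $\gop{L}{g^{e}}^{\bot}$, spans $\gop{L}{g^{e+1}}^{\bot}$.

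First I would reduce to the local situation already set up: here $g = (x-\rho)^t$ with $\rho \in \Fqm$ not among the entries of $L$, so $g^{e+1} = (x-\rho)^{t(e+1)}$ and $g^e = (x-\rho)^{te}$. Passing to the dual via Lemma~\ref{lem:evgoppa}, the parity-check space of $\gop{L}{(x-\rho)^N}$ is the $\Fq$-span of $\tau\bigl((x-\rho)^{-j}\bigr)$ for $1 \leq j \leq N$, where I write $(x-\rho)^{-j}$ loosely for the evaluation $\ev_L$ of the partial-fraction basis. The extra generators appearing when $N$ jumps from $te$ to $t(e+1)$ are the traces $\tau\bigl((x-\rho)^{-j}\bigr)$ for $te < j \leq t(e+1)$, so naively one adds $t$ new generators; the whole point is that over $\Fq$ these collapse to contribute at most one new dimension.

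The key mechanism I expect to drive this is the Frobenius/norm structure that makes Theorem~\ref{thm:equality} work in the root-free case, transplanted to a single root. Writing $\beta = 1/(\alpha_i - \rho)$ for each entry, the new parity checks are (component-wise) the $\Fq$-traces $\TA(\beta^{j})$ for $j$ in the critical range. The crucial algebraic input is that $e+1 = q^{m-1}+\cdots+q+1 = \frac{q^m-1}{q-1}$, so that raising to the $e+1$ power is the norm exponent and $\beta^{q-1}$ governs the Frobenius orbit. I would show that the $\Fq$-linear span of the newly added trace vectors $\TA(\beta^{te+1}), \dots, \TA(\beta^{t(e+1)})$, modulo the vectors already present from the exponents $\leq te$, is at most one-dimensional over $\Fq$. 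Concretely, I expect that each new trace $\TA(\beta^{te+\ell})$ for $2 \leq \ell \leq t$ can be rewritten, using $\beta^{(e+1)} = \NA(\beta) \in \Fq^n$ acting as a diagonal scalar and the relation $\beta^{te+\ell} = \beta^{t(e+1)}\star\beta^{-(t-\ell)} = \NA(\beta)^{t}\star\beta^{\ell - t}$, as an $\Fq$-combination of already-available generators, leaving only the single exponent $\ell = 1$ (or $\ell = t$) as a genuinely new direction.

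The main obstacle will be the bookkeeping that shows all but one of the $t$ candidate new generators are redundant over $\Fq$. The subtlety is precisely that $\rho$ lies in $\Fqm$ rather than in a proper subfield, so unlike the root-free case of Theorem~\ref{thm:equality} one genuinely cannot kill all the new generators — one survives, which is exactly why the bound is $1$ rather than $0$. I would make this precise by analysing the $\Fqm$-span first (where the $t$ new generators $\ev_L(1/(x-\rho)^j)$ are clearly independent from the old ones, contributing $t$ dimensions) and then tracking how the component-wise trace $\TA$ folds the Frobenius conjugates together: since $\NA(\beta) = \beta^{e+1} \in \Fq^n$ is Frobenius-fixed, multiplication by it commutes with $\TA$ and lets me slide exponents by multiples of $e+1$, collapsing the $\Fqm$-dimension $t$ down to $\Fq$-dimension at most $1$ after applying the trace. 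Verifying that this collapse is exact and leaves precisely one surviving dimension — equivalently, that $\tau\bigl(1/(x-\rho)^{te+1}\bigr)$ is not already in $\gop{L}{g^e}^{\bot}$ in the worst case — is the delicate step, but the inequality direction ($\leq 1$) only needs the upper bound, so it suffices to show every new generator beyond the first reduces to old ones plus $\Fq$-multiples of that first one.
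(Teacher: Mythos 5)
Your proposal has two genuine gaps, one structural and one technical.

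The structural gap: the statement has two parts, and you only attack part~(\ref{item:thmeqbis}). You reduce at the outset to $g=(x-\rho)^t$ and, for the root-free case, you appeal to ``the Frobenius/norm structure that makes Theorem~\ref{thm:equality} work'' as if it were available. But in this paper Theorem~\ref{thm:equality} is a \emph{consequence} of Theorem~\ref{thm:irred}~(\ref{item:thmeq}) (via Lemma~\ref{lem:split}), so the appeal is circular: case~(\ref{item:thmeq}) is exactly what must be proved, and it is the harder half. The codimension-$\leq 1$ bound follows from soft dimension counting once one exhibits a large trace-zero space (the paper's $K=\{a^q-a : a\in\Fqm[x]_{<t}\}$, of $\Fq$-dimension $mt-1$ and transverse to $g\Fqm[x]_{<et}$); but equality in case~(\ref{item:thmeq}) needs one \emph{additional} trace-zero direction transverse to everything, which the paper builds as $\langle \lambda a^{e+1}\rangle_{\Fq}$ with $\trF{\Fqm}{\Fq}(\lambda)=0$, the element $a$ being supplied by the counting argument of Proposition~\ref{prop:startkey} --- the one place where ``$g$ has no roots in $\Fqm$'' (so $r=\deg h\geq 2$) is used. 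Your sketch contains no substitute for this step and no explanation of why the defect drops from $\leq 1$ to $0$ precisely when $g$ is root-free.

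The technical gap: even within case~(\ref{item:thmeqbis}), your pivotal reduction is circular. Write $\beta \eqdef 1/\evL{x-\rho}$ (component-wise inverse). The identity $\TA(c\beta^{te+\ell})=\NA(\beta)^t\star\TA(c\beta^{\ell-t})$ merely re-expresses the new generator; it does not place it in $\gop{L}{g^{e}}^{\bot}$, because that code is \emph{not} stable under component-wise multiplication by the fixed vector $\NA(\beta)$: on generators this multiplication acts by $\NA(\beta)\star\TA(c\beta^{j})=\TA(c\beta^{j+e+1})$, pushing exponents out of the old range $1\leq j\leq te$. (Were the old code stable, your argument would kill all the new generators --- of which there are $mt$ over $\Fq$, not $t$, since each exponent carries a coefficient $c\in\Fqm$ --- and prove a gap of $0$, contradicting Table~\ref{tab:g=x}, where the gap is exactly $1$.) The folding that actually works is Frobenius invariance of the trace, $\TA(v^q)=\TA(v)$, combined with $\beta^{q^m-1}=(1,\ldots,1)$ and the congruence $q(e+1)\equiv e+1 \pmod{q^m-1}$: then $\TA(c\beta^{t(e+1)-i})=\TA(c^{q^k}\beta^{j_k})$ where $j_k$ is $t(e+1)-q^ki$ reduced mod $q^m-1$, and for $1\leq i\leq t-1$, taking $k$ minimal with $q^ki\geq t$ gives $t\leq q^ki<qt$, hence $t(e+1)-q^ki$ is an integer lying in $[t(e+1-q)+1,\,te]\subseteq[1,te]$ (recall $e\geq q$), i.e.\ an old exponent; meanwhile $\TA(c\beta^{t(e+1)})=\trF{\Fqm}{\Fq}(c)\,\NA(\beta)^t$ spans at most one $\Fq$-dimension (this also resolves your hesitation between $\ell=1$ and $\ell=t$: the survivor is the norm vector $\NA(\beta)^t$). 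One must also dispose of the degenerate case $te\geq q^m-1$, where every residue class already has a representative among the old exponents and the new contribution is $0$. With that replacement your dual/partial-fraction route becomes a correct proof of case~(\ref{item:thmeqbis}), genuinely different from the paper's Artin--Schreier-style construction of $K$; but as written, the key collapse is asserted via a mechanism that cannot establish it, and case~(\ref{item:thmeq}) is missing entirely.
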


\subsubsection{Duality and role of the norm}\label{subsec:dualRef}

Theorem~\ref{thm:irred} can be reformulated using duality together with an
argument involving the norm
$\NA_{\Fqm/\Fq}$
which is strongly related to the results of
Section~\ref{sec:proof1}.
\begin{prop}
  \label{prop:dualRef}
Theorem~\ref{thm:irred}
(\ref{item:thmeq}) and (\ref{item:thmeqbis}) are respectively equivalent to
\begin{enumerate}[{\it (i')}]
\item\label{it:ip} If $g$ has no roots in $\Fqm$ then
$\displaystyle
  \tau\left( \Fqm[x]_{<(e+1)t} \right)
   = \tau\left( g\Fqm[x]_{<et} \right)$.
\item\label{it:iip} If $g = (x-\rho)^t$ for some $\rho \in \Fqm$ and some $t\geq 1$,
  then
$$
\dim_{\Fq}  \tau \left( \Fqm[x]_{<(e+1)t}\right)
   -\dim_{\Fq} \tau \left( g\Fqm[x]_{<et}\right) \leq 1.
$$
\end{enumerate}
\end{prop}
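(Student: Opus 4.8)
The plan is to pass to the dual codes and rewrite them through the map $\tau$. By Lemma~\ref{lem:evgoppa}, since $g^{e+1}$ has degree $(e+1)t$, the dual $\gop{L}{g^{e+1}}^{\bot}$ is the componentwise trace of the GRS code $\bset{\ev_L(f)/\ev_L(g^{e+1})}{f\in \Fqm[x]_{<(e+1)t}}$; likewise $\gop{L}{g^{e}}^{\bot} = \TA\left(\bset{\ev_L(f)/\ev_L(g^{e})}{f\in \Fqm[x]_{<et}}\right)$, as $g^{e}$ has degree $et$. The goal is to show that each of these traces is, up to one fixed diagonal, the image of a polynomial space under $\tau$.

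The crucial observation is that $e+1 = 1+q+\cdots+q^{m-1}$ is precisely the exponent defining the norm, so $\ev_L(g)^{e+1} = \NA(\ev_L(g))$, and since $g$ has no root in $L$ this lies in $(\Fq^{\times})^n$. Because the componentwise trace is $\Fq$--linear and commutes with dividing by a fixed element of $\Fq^n$, I would compute
$$\TA\left(\ev_L(f)/\ev_L(g^{e+1})\right) = \TA(\ev_L(f))/\NA(\ev_L(g)) = \tau(f)/\NA(\ev_L(g)),$$
whence $\gop{L}{g^{e+1}}^{\bot} = \frac{1}{\NA(\ev_L(g))}\star \tau\left(\Fqm[x]_{<(e+1)t}\right)$. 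For the other code, writing $\ev_L(g)^{e} = \NA(\ev_L(g))/\ev_L(g)$ gives $\ev_L(f)/\ev_L(g^{e}) = \ev_L(fg)/\NA(\ev_L(g))$, so $\gop{L}{g^{e}}^{\bot} = \frac{1}{\NA(\ev_L(g))}\star \tau\left(g\,\Fqm[x]_{<et}\right)$.

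Both dual codes are thus obtained from the two $\tau$--images by multiplying by the \emph{same} invertible diagonal $1/\NA(\ev_L(g))$, a Hamming isometry. Consequently equality of the dual codes is equivalent to $\tau(g\,\Fqm[x]_{<et}) = \tau(\Fqm[x]_{<(e+1)t})$, and their dimensions differ by exactly $\dim_{\Fq}\tau(\Fqm[x]_{<(e+1)t}) - \dim_{\Fq}\tau(g\,\Fqm[x]_{<et})$. Since $g\,\Fqm[x]_{<et}\subseteq \Fqm[x]_{<(e+1)t}$ the inclusion $\gop{L}{g^{e+1}}\subseteq \gop{L}{g^{e}}$ (equivalently $\gop{L}{g^e}^{\bot}\subseteq \gop{L}{g^{e+1}}^{\bot}$) is automatic, so only the reverse inclusion is ever at stake. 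Taking orthogonals and using $\dim C + \dim C^{\bot} = n$ then turns Theorem~\ref{thm:irred}(\ref{item:thmeq}) into the identity in~(\ref{it:ip}) and the dimension gap of Theorem~\ref{thm:irred}(\ref{item:thmeqbis}) into the one in~(\ref{it:iip}).

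The main point to get right is the norm--trace interplay in the second step: it is exactly because the exponent $e+1$ is the norm exponent that the denominator $\ev_L(g)^{e+1}$ falls into the base field $\Fq^n$ and can be pulled through the componentwise trace, producing the \emph{same} diagonal multiplier for both codes and making them directly comparable. Everything else is degree bookkeeping. Note that this reformulation uses only that $g$ has no root in $L$ (so that $\NA(\ev_L(g))$ is invertible); the distinction between ``$g$ has no root in $\Fqm$'' and ``$g=(x-\rho)^t$'' plays no role here and only separates which of~(\ref{it:ip})/(\ref{it:iip}) one subsequently establishes.
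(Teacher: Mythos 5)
Your proof is correct and follows essentially the same route as the paper: both pass to duals via Lemma~\ref{lem:evgoppa}, use the identity $\ev_L(g)^{e+1}=\NA(\ev_L(g))\in(\Fq^{\times})^n$ to pull the denominator through the componentwise trace, rewrite $\ev_L(f)/\ev_L(g^e)$ as $\ev_L(fg)/\NA(\ev_L(g))$ to produce the space $g\,\Fqm[x]_{<et}$, and observe that the common invertible diagonal makes the dual codes comparable directly through $\tau$. Your explicit dimension bookkeeping via $\dim C+\dim C^{\bot}=n$, and your remark that only ``$g$ has no root in $L$'' is needed for the reformulation itself, merely spell out what the paper compresses into ``the very same proof replacing equalities by inclusions with codimension~$1$.''
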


\begin{proof}
Let us prove that Theorem~\ref{thm:irred}~(\ref{item:thmeq})
is equivalent to $(i')$.
Let $g$ be a polynomial with no roots in $\Fqm$.
By using  Lemma~\ref{lem:evgoppa}, Theorem~\ref{thm:irred}~(\ref{item:thmeq})
is equivalent to its dual reformulation, namely
\begin{equation}
  \label{eq:dualR}
 \bset{ \TA \left( \frac{\evL{f} }{\evL{g^{e+1}}} \right)}{f\in \Fqm[x]_{<(e+1)t}}
= \bset{ \TA \left( \frac{\evL{f} }{\evL{g^e}} \right)}{f\in \Fqm[x]_{<et}}
\end{equation}
Note that
\begin{eqnarray}
\bset{ \TA \left( \frac{\evL{f} }{\evL{g^e}} \right)}{f\in \Fqm[x]_{<et}} & = & \bset{ \TA \left( \frac{\evL{f}\star \evL{g} }{\evL{g}^e \star \evL{g}} \right)}{f\in \Fqm[x]_{<et}}\\
& = &  \bset{ \TA \left( \frac{\evL{h} }{\evL{g}^{e+1}} \right)}{h\in g\Fqm[x]_{<et}}.
\end{eqnarray}
In addition, since 
$\evL{g^{e+1}}= 
 \NA\left(\evL{g}\right)$, this vector
has its entries in $\Fq$. Hence these denominators can be pulled out of the 
traces. Therefore, 
\eqref{eq:dualR}  is equivalent to
\begin{equation}
  \label{eq:bidule}
   \bset{ \frac{1}{\evL{g}^{e+1}} \star \TA \left( \evL{f}  \right)}{f\in \Fqm[x]_{<(e+1)t}}
   =  \bset{ \frac{1}{\evL{g}^{e+1}} \star \TA \left( \evL{h}  \right)}{h\in g\Fqm[x]_{<et}}.
\end{equation}
Finally, this equality is clearly equivalent to
\begin{equation}
  \label{eq:chose}
   \bset{  \TA \left( \evL{f}  \right)}{f\in \Fqm[x]_{<(e+1)t}}
   =  \bset{  \TA \left( \evL{h}  \right)}{h\in g\Fqm[x]_{<et}}.
\end{equation}
This concludes the proof.
The proof of the equivalence between
Theorem~\ref{thm:irred}~(\ref{item:thmeqbis})
and {\it (ii')} is the very same one replacing equalities by inclusions
with codimension $1$.
\end{proof}

\subsubsection{The spaces $K$ and $T$}

Proposition \ref{prop:K} below explains that the proof of Proposition~\ref{prop:dualRef}, which is equivalent to Theorem~\ref{thm:irred}, can be achieved 
by proving the existence of two special vector spaces $K$ and $T$.

\begin{prop}\label{prop:K}
  If there exists a subspace $K$ of 
  $\Fqm[x]_{<(e+1)t}$ satisfying:
\begin{enumerate}[(I)]
\item\label{item:subset} $K\subset \ker
\tau$
\item\label{item:cap} $K \cap g\Fqm[x]_{<et} = \{0\}$
\item\label{item:dimK0} $\dim_{\Fq} K = mt-1$
\end{enumerate}
then it implies: 
$$\dim_{\Fq}  \tau \left( \Fqm[x]_{<(e+1)t}\right)
   -\dim_{\Fq} \tau \left( g\Fqm[x]_{<et}\right) \leq 1.
$$

\medskip

In addition, if $g$ has no roots in $\Fqm$,
and if there exists another $\Fq$subspace $T$ of 
$\Fqm[x]_{<(e+1)t}$ such that
\begin{enumerate}[(I)]
\setcounter{enumi}{3}
\item\label{item:K} $K \oplus T \subset \ker \tau$ \ \ \ and \ \ \ 
$\Fqm[x]_{<(e+1)t} = K \oplus  T \oplus g\Fqm[x]_{<et} $,
\end{enumerate}
then the equality $  \tau\left( \Fqm[x]_{<(e+1)t} \right)
   = \tau\left( g\Fqm[x]_{<et} \right)$ holds.
\end{prop}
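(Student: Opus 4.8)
The plan is to read this as a pure linear-algebra statement about the $\Fq$-linear map $\tau$, applied to the big space $V \eqdef \Fqm[x]_{<(e+1)t}$ and its subspace $W \eqdef g\Fqm[x]_{<et}$. The first thing I would record is the inclusion $W\subseteq V$ (since $\deg g = t$, multiplying a polynomial of degree $<et$ by $g$ yields degree $<(e+1)t$) together with the dimensions, which come from the fact that multiplication by $g$ is an $\Fqm$-linear bijection of $\Fqm[x]_{<et}$ onto $W$. Thus $\dim_{\Fq} V = m(e+1)t$ and $\dim_{\Fq} W = met$, so that
\[
\dim_{\Fq} V - \dim_{\Fq} W = mt.
\]
This single integer $mt$ is what the whole statement revolves around.

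For the first implication I would apply the rank--nullity theorem to the restrictions of $\tau$ to $V$ and to $W$. Writing $K_V \eqdef V \cap \ker \tau$ and $K_W \eqdef W \cap \ker \tau$, this gives
\[
\dim_{\Fq}\tau(V) - \dim_{\Fq}\tau(W) = mt - \bigl(\dim_{\Fq} K_V - \dim_{\Fq} K_W\bigr),
\]
so it suffices to prove $\dim_{\Fq} K_V - \dim_{\Fq} K_W \geq mt-1$. Here the three hypotheses on $K$ enter exactly: by (\ref{item:subset}) and $K\subseteq V$ we have $K\subseteq K_V$; by (\ref{item:cap}) we have $K\cap W = \{0\}$, hence a fortiori $K\cap K_W = \{0\}$; and by (\ref{item:dimK0}) we have $\dim_{\Fq} K = mt-1$. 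Since $K$ and $K_W$ are subspaces of $K_V$ meeting only in $0$, their sum is direct, whence $\dim_{\Fq} K_V \geq \dim_{\Fq} K + \dim_{\Fq} K_W = (mt-1) + \dim_{\Fq} K_W$. Substituting back yields $\dim_{\Fq}\tau(V) - \dim_{\Fq}\tau(W) \leq 1$, as claimed.

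For the second implication, the extra subspace $T$ makes the argument immediate and no longer requires any dimension count. From hypothesis (\ref{item:K}) we have the direct-sum decomposition $V = K \oplus T \oplus W$, so by $\Fq$-linearity $\tau(V) = \tau(K) + \tau(T) + \tau(W)$. But $K \oplus T \subseteq \ker \tau$ forces $\tau(K) = \tau(T) = \{0\}$, leaving $\tau(V) = \tau(W)$; combined with the trivial inclusion $\tau(W)\subseteq\tau(V)$ coming from $W\subseteq V$, this is the desired equality.

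I do not expect a genuine obstacle inside this proposition itself: both parts are formal consequences of rank--nullity and the direct-sum structure. The real difficulty is deferred, lying in actually exhibiting spaces $K$ and $T$ with properties (\ref{item:subset})--(\ref{item:K}), and it is there (in the explicit construction, where the hypothesis that $g$ has no roots in $\Fqm$ is presumably used) that I expect the work to reside. Within the present statement the only point demanding care is bookkeeping: getting $\dim_{\Fq} V - \dim_{\Fq} W = mt$ right, and keeping the intersection $K\cap K_W = \{0\}$ carefully distinguished from the stronger given condition $K\cap W = \{0\}$.
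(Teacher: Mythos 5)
Your proof is correct in both parts, and it is a genuine (if mild) variant of the paper's argument rather than a reproduction of it. Write $V = \Fqm[x]_{<(e+1)t}$ and $W = g\Fqm[x]_{<et}$. You work on the kernel side: rank--nullity gives $\dim_{\Fq}\tau(V)-\dim_{\Fq}\tau(W)= mt -\bigl(\dim_{\Fq}(V\cap\ker\tau)-\dim_{\Fq}(W\cap\ker\tau)\bigr)$, and conditions (I)--(III) force the kernel gap to be at least $mt-1$ via the direct sum $K\oplus(W\cap\ker\tau)\subseteq V\cap\ker\tau$. The paper instead works on the image side: by (II) and (III) the subspace $K\oplus W$ has codimension $1$ in $V$, so there exists a line $T_0$ with $V= K\oplus T_0\oplus W$; applying $\tau$ and using (I) yields $\tau(V)=\tau(T_0)+\tau(W)$, hence $\dim_{\Fq}\tau(V)\leq \dim_{\Fq}\tau(W)+1$. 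The two computations rest on the same dimension count $\dim_{\Fq}V-\dim_{\Fq}W=mt$ and are equally elementary, but the paper's phrasing has a structural payoff: it exhibits a single complementary line $T_0$ as the only possible obstruction to equality, so the second half of the proposition reads as ``equality holds as soon as that line can be chosen inside $\ker\tau$'' --- exactly what the later construction of $T$ (Proposition~\ref{prop:key}) supplies; your rank--nullity version does not set up that parallel, though it handles the second part just as easily and identically to the paper ($V=K\oplus T\oplus W$ with $K\oplus T\subseteq\ker\tau$ gives $\tau(V)=\tau(W)$ at once). Your supporting bookkeeping --- $\dim_{\Fq}W=met$ from injectivity of multiplication by $g$, $W\subseteq V$ from $\deg g=t$, and the careful distinction between the hypothesis $K\cap W=\{0\}$ and the weaker fact $K\cap(W\cap\ker\tau)=\{0\}$ actually used --- is all in order.
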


\begin{proof}
This follows basically from the fact that $\dim_{\Fq}   \left( \Fqm[x]_{<(e+1)t}\right) = m(e+1)t$ whereas
$ \dim_{\Fq}   \left( \Fqm[x]_{<et}\right) = met$.
 If such a space $K$ exists, then
from (\ref{item:cap}) and (\ref{item:dimK0}), we get the existence
 of an $\Fq$--one--dimen\-sional subspace $T_0$ of $\Fqm[x]_{<(e+1)t}$ such that
 $ \Fqm[x]_{<(e+1)t} = K \oplus T_0 \oplus g\Fqm[x]_{<et}. $
Then, (\ref{item:subset}) leads to Proposition~\ref{prop:dualRef}~(\ref{it:iip}')
 or equivalently Theorem~\ref{thm:irred}~(\ref{item:thmeqbis}).  
 If in addition, there exists a space $T$ satisfying~(\ref{item:K}),
 then we clearly get Proposition~\ref{prop:dualRef}~(\ref{it:ip}')
 or equivalently Theorem~\ref{thm:irred}~(\ref{item:thmeq}).
 \end{proof}



\bigskip

From now on, the polynomial $g$ is assumed to be either irreducible
or of the form $g= h^s$ for some irreducible polynomial $h\in \Fqm[x]$ and some
integer $s>1$.
The degree of $g$ is denoted by $t$.

\subsection{The construction of $K$ and the existence of $T$}\label{subsec:K0}
 First, we recall a well known result concerning elements whose trace is zero (see \cite[Theorem 2.25]{LidlNiederreiter} for a proof).
\begin{lem}\label{lem:H90}
  For all $\alpha\in \Fqm$ such that $\tr_{\Fqm/\Fq}(\alpha)=0$, there exists
  $\beta \in \Fqm$ such that $\alpha = \beta - \beta^q$.
\end{lem}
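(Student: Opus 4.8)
The plan is to recognise this statement as the additive form of Hilbert's Theorem~90 and to establish it by a clean dimension count over $\Fq$ rather than by writing down an explicit $\beta$. The object to study is the map
$$
\phi : \map{\Fqm}{\Fqm}{\beta}{\beta - \beta^q}.
$$
First I would check that $\phi$ is $\Fq$--linear: additivity is immediate since the Frobenius $x\mapsto x^q$ is additive, and for $c\in \Fq$ one has $(c\beta)^q = c^q\beta^q = c\beta^q$ because $c^q=c$, whence $\phi(c\beta)=c\,\phi(\beta)$. The statement then amounts to showing that the image of $\phi$ equals $\ker \trF{\Fqm}{\Fq}$, since this kernel is exactly the set of trace--zero elements named in the lemma.

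The inclusion $\im \phi \subseteq \ker \trF{\Fqm}{\Fq}$ follows from the Frobenius--invariance of the trace. Writing $\trF{\Fqm}{\Fq}(\beta)=\sum_{i=0}^{m-1}\beta^{q^i}$, raising $\beta$ to the $q$--th power only cycles the summands, so $\trF{\Fqm}{\Fq}(\beta^q)=\trF{\Fqm}{\Fq}(\beta)$ and therefore $\trF{\Fqm}{\Fq}(\phi(\beta))=\trF{\Fqm}{\Fq}(\beta)-\trF{\Fqm}{\Fq}(\beta^q)=0$.

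To upgrade this inclusion to an equality I would compare $\Fq$--dimensions. The kernel of $\phi$ consists of the $\beta$ with $\beta^q=\beta$, that is $\beta\in \Fq$, so $\dim_{\Fq}\ker\phi=1$ and by rank--nullity $\dim_{\Fq}\im \phi=m-1$. On the other hand the trace is a \emph{nonzero} $\Fq$--linear functional on the $m$--dimensional space $\Fqm$, hence surjective, so $\dim_{\Fq}\ker\trF{\Fqm}{\Fq}=m-1$. Since $\im \phi$ sits inside $\ker\trF{\Fqm}{\Fq}$ and the two have equal dimension $m-1$, they coincide; thus every $\alpha$ with $\trF{\Fqm}{\Fq}(\alpha)=0$ is of the form $\alpha=\beta-\beta^q$.

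There is no genuine obstacle here, the argument being elementary linear algebra over $\Fq$. The single point that deserves care is the surjectivity of the trace, i.e. that $\trF{\Fqm}{\Fq}$ is not identically zero: this is guaranteed by separability of finite field extensions, and can be seen directly from the fact that the $m$ automorphisms $\beta\mapsto\beta^{q^i}$ are linearly independent as characters, so the functional $\sum_i\beta^{q^i}$ cannot vanish. An equivalent route avoiding the dimension count would be to produce $\beta$ explicitly from $\alpha$ and any fixed element of nonzero trace, but the dimension argument is shorter and is the one I would record.
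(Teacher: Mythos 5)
Your proof is correct. For comparison: the paper does not prove this lemma itself at all --- it simply cites \cite[Theorem 2.25]{LidlNiederreiter}, i.e.\ the additive form of Hilbert's Theorem~90 for finite fields. Your dimension count is in substance the standard argument behind that citation (the reference argues by counting elements of the image of $\beta \mapsto \beta^q - \beta$ against the number of trace-zero elements, which is the same computation phrased in cardinalities rather than in $\Fq$-dimensions). Every step you need is in place: the $\Fq$-linearity of $\phi(\beta) = \beta - \beta^q$, the inclusion $\im\, \phi \subseteq \ker \trF{\Fqm}{\Fq}$ via invariance of the trace under Frobenius, the identification $\ker \phi = \Fq$ (the roots of $x^q - x$ lying in $\Fqm$), and the fact that the trace is a nonzero functional so that its kernel is a hyperplane of $\Fq$-dimension $m-1$; this last point, the only one with real content, you correctly justify by linear independence of characters (one could equally note that $\sum_{i=0}^{m-1} x^{q^i}$ has degree $q^{m-1} < q^m$ and hence cannot vanish on all of $\Fqm$). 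So the net difference is only that your version makes the result self-contained where the authors chose to delegate to the literature; nothing in your argument needs repair.
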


We look for an $\Fq$--subspace
$K \subset \Fqm[x]_{<(e+1)t}$ satisfying~(\ref{item:subset}),
(\ref{item:cap}) and (\ref{item:dimK0}) of Proposition~\ref{prop:K}.
Notice that we have
$
\ker \tau  =  \left\{ a \in \Fqm[x]_{<(e+1)t}\; |\; \TA (\evL{a})=0\right\}.
$
This point, together with  Lemma~\ref{lem:H90}, explain the rationale behind the following definition.
  
\begin{defn}\label{def:K0}
  We denote by $K$ the image of the map
  \begin{equation}\label{eq:defmu}
  \mu : \map{\Fqm[x]_{<t}}{\Fqm[x]}{a}{a^q-a}.
  \end{equation}
\end{defn}

\begin{lem}\label{lem:K0}
  We have, $K \subset \ker \tau$ and
  $\dim_{\Fq} K = mt-1.$
\end{lem}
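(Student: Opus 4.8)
The plan is to prove the two assertions separately, after the preliminary observation that $\mu$ is $\Fq$--linear: indeed $(a+b)^q = a^q+b^q$ in characteristic $p$, and $(\lambda a)^q = \lambda^q a^q = \lambda a^q$ for $\lambda \in \Fq$ since $\lambda^q = \lambda$. Because $\Fqm[x]_{<t}$ has $\Fq$--dimension $mt$, the equality $\dim_{\Fq} K = mt-1$ will follow from the rank--nullity theorem applied to $\mu$ as soon as I show $\dim_{\Fq}\ker\mu = 1$. The inclusion $K \subset \ker\tau$ will rest entirely on the Frobenius--invariance of the trace $\trF{\Fqm}{\Fq}$.

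For the inclusion, I take $a \in \Fqm[x]_{<t}$. Since $\ev_L$ is a ring homomorphism, $\ev_L(a^q - a) = \ev_L(a)^q - \ev_L(a)$, the power being component--wise. Writing $v = \ev_L(a)$, it suffices to check that each entry $v_i \in \Fqm$ satisfies $\trF{\Fqm}{\Fq}(v_i^q - v_i) = 0$. This is immediate from $\trF{\Fqm}{\Fq}(\beta^q) = \beta^q + \beta^{q^2} + \cdots + \beta^{q^m}$ together with $\beta^{q^m} = \beta$, which gives $\trF{\Fqm}{\Fq}(\beta^q) = \trF{\Fqm}{\Fq}(\beta)$, hence $\trF{\Fqm}{\Fq}(\beta^q - \beta) = 0$. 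Therefore $\tau(\mu(a)) = \TA(\ev_L(a^q - a)) = 0$, so $K \subset \ker\tau$. I would also record the degree bound $\deg(a^q - a) \leq q(t-1) < (e+1)t$, valid since $e+1 \geq q+1$, which confirms that $K$ genuinely sits inside $\Fqm[x]_{<(e+1)t}$ as demanded by Proposition~\ref{prop:K}.

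It remains to compute $\ker\mu = \{a \in \Fqm[x]_{<t} : a^q = a\}$. If $a$ were non--constant, with $d = \deg a \geq 1$, then $a^q$ would have degree $qd > d = \deg a$, so the leading monomial of $a^q$ could not be matched by $a$, contradicting $a^q = a$. Hence $a$ is a constant $a_0$, and $a^q = a$ then reads $a_0^q = a_0$, i.e.\ $a_0 \in \Fq$. Thus $\ker\mu$ is exactly the $\Fq$--line of constant polynomials with value in $\Fq$, so $\dim_{\Fq}\ker\mu = 1$, and rank--nullity yields $\dim_{\Fq} K = mt - 1$.

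The only genuinely delicate point is the kernel computation, where one must exclude non--constant polynomials fixed by the $q$--power map; the leading--term argument dispatches this cleanly. The trace--invariance and the linearity of $\mu$ are standard, so I anticipate no real obstacle beyond keeping the component--wise and polynomial interpretations of $a \mapsto a^q$ consistent.
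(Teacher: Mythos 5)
Your proof is correct and follows essentially the same route as the paper's: the inclusion $K \subset \ker\tau$ via the Frobenius-invariance of $\trF{\Fqm}{\Fq}$ (and the degree bound using $e \geq q$), and the dimension via showing $\ker\mu$ consists exactly of the constant polynomials in $\Fq$, then applying rank--nullity. Your added details (explicit $\Fq$-linearity of $\mu$ and the leading-term argument ruling out non-constant fixed points of the $q$-power map) are points the paper leaves implicit, but the argument is the same.
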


\begin{proof}
  First, let us check that $K \subset \Fqm[x]_{<(e+1)t}$.
  Let $a$ be an element of $\Fqm[x]_{<t}$,
  we have $a^q-a \in \Fqm[x]_{<qt}$.
  Since, $e=q^{m-1}+\cdots + q$ and $m\geq 2$,
  we have $e\geq q$ and hence $a^q-a \in \Fqm[x]_{<(e+1)t}$.
  Next, by definition of $\TA$ and 
  its elementary properties,
  we have:
  $$
  \TA \left(\evL{a^q - a} \right) =
  \TA \left(\evL{a}^q\right) - \TA \left( \evL{a}\right)=
  0.
  $$
This yields the inclusion $K \subset\ker \tau$.
  To get the dimension, we prove that $\dim_{\Fq} \ker \mu = 1$.
  Let $a$ be an element of $\ker \mu$, {\it i.e.} $a$ is a polynomial
  in $\Fqm[x]_{<t}$ satisfying $a(x)^q= a(x)$. Then
  the degree of $a$ is zero and $a$ is nothing but a constant polynomial
  satisfying $a^q=a$.
  Thus, $\ker \mu$ consists in the subspace of constant polynomials
  lying in $\Fq$.
\end{proof}

  In what follows some proofs require the use of congruences modulo
some polynomials. For this reason we introduce the following notation.

\begin{nota}\label{nota:class}
For $f \in \Fqm[x]$ and for all $a \in \Fqm[x]$,
we denote by $a\mod (f)$ the class of $a$ in $\fract{\Fqm[x]}{(f)}$.
In the same manner, the image of $K$ by the canonical map
$\Fqm[x] \rightarrow \fract{\Fqm[x]}{(f)}$ will be denoted by $(K \mod f)$.
Finally, after the map $\mu$ introduced in~(\ref{eq:defmu})
we define for all $f\in \Fqm[x]$, the map $\mu_f$ as
  \begin{equation}\label{eq:muf}
  \mu_f :
  \map{\fract{\Fqm[x]}{(f)}}{\fract{\Fqm[x]}{(f)}}{{a\mod (f)}}{{a^q-a \mod (f)}}.
  \end{equation}
\end{nota}

\begin{lem}\label{lem:kerTr}
  Let $h$ be an irreducible polynomial of degree $r$
  such that $g=h^s$ for some positive integer
  $s$ 
  (possibly $s = 1$).
  Viewing $\fract{\Fqm[x]}{(h)}$ as the finite field
  $\Fqmr$ 
  it turns out that  $K \mod (h)$ satisfies:
  $$
    K \mod (h) =  \ker \trF{\Fqmr}{\Fq}.
  $$
\end{lem}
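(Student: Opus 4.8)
The plan is to push the defining map $\mu$ of $K$ through the reduction modulo $h$, recognise the resulting operation as the additive Artin--Schreier map on the residue field $\Fqmr$, and then invoke Lemma~\ref{lem:H90}.

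First I would use that reduction modulo $h$ is a ring homomorphism $\Fqm[x]\rightarrow \fract{\Fqm[x]}{(h)} = \Fqmr$, so it commutes with the $q$-th power and with subtraction. Hence, writing $\ov{a} = a\mod (h)$ for $a\in \Fqm[x]_{<t}$, one has $\mu(a)\mod (h) = \ov{a}^{\,q}-\ov{a}$, and therefore $K\mod (h) = \bset{\ov{a}^{\,q}-\ov{a}}{a\in \Fqm[x]_{<t}}$. Next I would note that since $t = \deg(g) = sr \geq r = \deg(h)$, the reduction map $\Fqm[x]_{<t}\rightarrow \Fqmr$ is surjective: indeed, by the division algorithm its restriction to $\Fqm[x]_{<r}\subseteq \Fqm[x]_{<t}$ is already a bijection onto $\Fqmr$. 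Consequently $\ov{a}$ ranges over all of $\Fqmr$, and $K\mod (h)$ equals the image of the map $\wp : \beta \mapsto \beta^q-\beta$ on $\Fqmr$.

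It then remains to identify $\im(\wp)$ with $\ker \trF{\Fqmr}{\Fq}$. One inclusion is immediate: since the trace is invariant under the Frobenius $\beta \mapsto \beta^q$, we get $\trF{\Fqmr}{\Fq}(\beta^q-\beta) = \trF{\Fqmr}{\Fq}(\beta^q) - \trF{\Fqmr}{\Fq}(\beta) = 0$ for every $\beta\in \Fqmr$, so $\im(\wp)\subseteq \ker \trF{\Fqmr}{\Fq}$. For the reverse inclusion I would apply Lemma~\ref{lem:H90} to the extension $\Fqmr/\Fq$: every $\alpha\in \Fqmr$ with $\trF{\Fqmr}{\Fq}(\alpha)=0$ is of the form $\beta-\beta^q = -\wp(\beta)$, hence lies in $\im(\wp)$. (Alternatively one can finish by a dimension count, since $\ker \wp = \Fq$ is one-dimensional over $\Fq$, forcing $\dim_{\Fq}\im(\wp) = mr-1 = \dim_{\Fq}\ker \trF{\Fqmr}{\Fq}$.)

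I do not expect any serious obstacle here: the only points requiring care are the commutation $\mu(a)\mod (h) = \ov{a}^{\,q}-\ov{a}$ and the surjectivity of the reduction, both of which are routine. The real content is conceptual rather than computational, namely recognising that the definition of $K$ encodes, after reduction modulo $h$, exactly the additive Hilbert~90 statement of Lemma~\ref{lem:H90} for the field $\Fqmr$ over $\Fq$.
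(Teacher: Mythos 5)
Your proof is correct and follows essentially the same route as the paper's: reduce modulo $h$, use the surjectivity of $\Fqm[x]_{<t}\rightarrow \fract{\Fqm[x]}{(h)}$ to identify $K \mod (h)$ with the image of the Artin--Schreier map $\beta\mapsto \beta^q-\beta$ on $\Fqmr$, and conclude via Lemma~\ref{lem:H90}. The only difference is that you spell out the details the paper leaves implicit (the compatibility of $\mu$ with reduction, the forward inclusion $\im(\wp)\subseteq \ker\trF{\Fqmr}{\Fq}$, and the sign issue $\beta-\beta^q=\wp(-\beta)$), which is a faithful expansion rather than a different argument.
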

\begin{rem}
  In particular, if $g$ is irreducible ($s$=1) 
  then $K \mod (g) =  \ker \trF{\Fqmt}{\Fq}.$
\end{rem}

\begin{proof}
  Since $r\eqdef \deg(h)\leq t$, the map 
  $\Fqm[x]_{<t} \rightarrow \fract{\Fqm[x]}{(h)}$
  is surjective
  and hence $(K \mod (h))$
  is nothing but the image
  of $\mu_h$ (see (\ref{eq:muf}) in Notation~\ref{nota:class}).
  Considering the quotient ring $\fract{\Fqm[x]}{(h)}$
  as the finite field $\Fqmr$ we conclude 
  by 
  using Lemma~\ref{lem:H90}.
\end{proof}

\begin{prop}
  \label{prop:dimKbar}
  We have
  $$
  \dim_{\Fq} (K \mod (g)) = mt-1.
  $$
\end{prop}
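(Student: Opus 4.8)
The plan is to identify $K \bmod (g)$ with the image of the $\Fq$--linear map $\mu_g$ from Notation~\ref{nota:class}, and then to compute the dimension of that image by determining $\ker \mu_g$. Since $\deg(g) = t$, the reduction map $\Fqm[x]_{<t} \to \fract{\Fqm[x]}{(g)}$ is an $\Fq$--linear isomorphism, and because reduction modulo $g$ is a ring homomorphism we have $(a^q - a) \bmod (g) = (a \bmod (g))^q - (a \bmod (g))$. Letting $a$ run through $\Fqm[x]_{<t}$, its class runs through all of $\fract{\Fqm[x]}{(g)}$, so
\[
K \bmod (g) = \im(\mu_g).
\]
A direct check shows $\mu_g$ is $\Fq$--linear: additivity comes from the characteristic dividing $q$, and $\Fq$--homogeneity from $c^q = c$ for $c \in \Fq$. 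As $\dim_{\Fq} \fract{\Fqm[x]}{(g)} = mt$, it remains to prove that $\dim_{\Fq} \ker \mu_g = 1$, for then $\dim_{\Fq}(K \bmod (g)) = mt - \dim_{\Fq}\ker\mu_g = mt - 1$.

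Now $\ker \mu_g$ is precisely the set of $\xi \in \fract{\Fqm[x]}{(g)}$ fixed by the $q$--power Frobenius, i.e. $\xi^q = \xi$. When $g$ is irreducible ($s = 1$) the ring $\fract{\Fqm[x]}{(g)}$ is the field $\Fqmt$, and the Frobenius--fixed elements of this field are exactly $\Fq$, which is $1$--dimensional over $\Fq$; this already matches the remark following Lemma~\ref{lem:kerTr}. For $s > 1$, write $g = h^s$ with $h$ irreducible of degree $r$; then $\fract{\Fqm[x]}{(g)}$ is a local $\Fq$--algebra with maximal ideal $\mathfrak{m} = (h)/(h^s)$, residue field $\Fqmr$, and nilpotent $\mathfrak{m}$ (indeed $\mathfrak{m}^s = 0$).

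The heart of the argument, and the step I expect to be the main obstacle, is ruling out extra fixed points coming from the nilpotents when $s > 1$. I would argue as follows. Given $\xi$ with $\xi^q = \xi$, its reduction $\bar\xi$ in the residue field $\Fqmr$ satisfies $\bar\xi^q = \bar\xi$, hence $\bar\xi \in \Fq$; lifting, write $\xi = c + \eta$ with $c \in \Fq$ and $\eta \in \mathfrak{m}$. Since the characteristic divides $q$, the freshman's dream gives $(c+\eta)^q = c^q + \eta^q = c + \eta^q$, so the relation $\xi^q = \xi$ collapses to $\eta^q = \eta$. But $\eta \in \mathfrak{m}$ is nilpotent, so iterating the relation yields $\eta = \eta^{q^k} = 0$ as soon as $q^k \geq s$, because then $\eta^{q^k} \in \mathfrak{m}^{q^k} \subseteq \mathfrak{m}^s = \{0\}$. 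Hence $\xi = c \in \Fq$, proving $\ker \mu_g = \Fq$ and $\dim_{\Fq} \ker \mu_g = 1$ in all cases. Combining with the first paragraph gives $\dim_{\Fq}(K \bmod (g)) = mt - 1$, as claimed.
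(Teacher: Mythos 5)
Your proof is correct and takes essentially the same route as the paper's: both identify $K \bmod (g)$ with the image of $\mu_g$, show an element of $\ker \mu_g$ reduces modulo $h$ to a constant in $\Fq$, and then kill the remaining part by applying a Frobenius power $q^k \geq s$ together with $h^{q^k} \equiv 0 \pmod{g}$. The only difference is cosmetic: you phrase the last step in the language of local rings and nilpotent ideals, while the paper writes out the same computation as explicit polynomial congruences.
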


\begin{proof}
  If $g$ is irreducible, then it is a straightforward consequence of
  Lemma~\ref{lem:kerTr}.
  Let us assume that $g$ is of the form $h^s$ for some irreducible
  polynomial $h$ and some integer $s>1$.
  Recall that the space $K \mod (g)$
  is nothing but the image of $\mu_g$ (see (\ref{eq:muf}) in
  Notation~\ref{nota:class}).
  Therefore, we wish to prove that $\dim_{\Fq} \ker \mu_g=1$.
  We will show that $\ker \mu_g$ is isomorphic to $\Fq$.

  Let $a\in \Fqm[x]$ such that $(a \mod (g)) \in \ker \mu_g$. That is
  \begin{equation}
    \label{eq:amodg}
      a\equiv a^q \mod (g)  
  \end{equation}
  Since $g = h^s$, we have {\it a fortiori}
  $a \equiv a^q \mod (h)$. Since $\fract{\Fqm[x]}{(h)}$ is a field
  containing $\Fq$, then $a\mod (h)$ is represented by a constant
  polynomial lying in $\Fq$. Therefore, there exists $\alpha \in \Fq$
  and $a_1(x) \in \Fqm[x]$, such that
  $$
  a(x) = \alpha + h(x)a_1(x).
  $$
  From (\ref{eq:amodg}), we get $a\equiv a^{q^i} \mod (g)$ for all $i>0$.
  Choose $i$ such that $q^i \geq s.$ Then, $h^{q^i} \equiv 0 \mod (g)$ since
  $g=h^s$. Therefore, $a \equiv a^{q^i} \mod (g)$ entails
  $$
  a \equiv \alpha^{q^i} \mod (g).
  $$
  Finally, since $\alpha \in \Fq$, we have $\alpha^{q^i} = \alpha$
  which entails $a\equiv \alpha \mod (g)$.
  This yields an $\Fq$--isomorphism between $\ker \mu_g$ and $\Fq$, which
  concludes the proof.
\end{proof}

\begin{cor}\label{cor:cap}
  $K \cap g\Fqm[x]_{<et} = \{0\}$.
\end{cor}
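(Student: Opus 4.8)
The plan is to deduce the corollary purely from the two dimension computations just established, namely $\dim_{\Fq} K = mt-1$ (Lemma~\ref{lem:K0}) and $\dim_{\Fq}(K \mod (g)) = mt-1$ (Proposition~\ref{prop:dimKbar}). The guiding idea is that reduction modulo $g$ does not collapse $K$ at all, so $K$ can meet the ideal of multiples of $g$ only in $0$.

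First I would introduce the canonical reduction homomorphism $\pi : \Fqm[x] \to \fract{\Fqm[x]}{(g)}$ and restrict it to $K$, regarded as an $\Fq$-linear map. By the very definition of the notation $(K \mod (g))$ in Notation~\ref{nota:class}, the image of $\pi|_K$ is exactly $K \mod (g)$, whereas its kernel is $K \cap \ker \pi = K \cap g\Fqm[x]$. Then I would apply the rank--nullity theorem over $\Fq$ to $\pi|_K$, which reads
$$
\dim_{\Fq} K = \dim_{\Fq}\left(K \cap g\Fqm[x]\right) + \dim_{\Fq}\left(K \mod (g)\right).
$$
Substituting the two values $mt-1$ coming from Lemma~\ref{lem:K0} and Proposition~\ref{prop:dimKbar} forces $\dim_{\Fq}(K \cap g\Fqm[x]) = 0$, that is $K \cap g\Fqm[x] = \{0\}$. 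Finally, since $g\Fqm[x]_{<et} \subseteq g\Fqm[x]$, one immediately concludes that $K \cap g\Fqm[x]_{<et} = \{0\}$, as desired.

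There is essentially no genuine obstacle at this stage: all of the real work has already been pushed into Proposition~\ref{prop:dimKbar}, whose proof dealt with the delicate case $g = h^s$ with $s>1$ by analysing $\ker \mu_g$. The only minor point to keep in mind is that $K$ indeed sits inside $\Fqm[x]_{<(e+1)t}$ (this is part of Lemma~\ref{lem:K0}), so the intersection under consideration is the expected one; but this plays no role in the dimension count, and restricting the degree bound from $g\Fqm[x]$ down to $g\Fqm[x]_{<et}$ only shrinks an already trivial intersection.
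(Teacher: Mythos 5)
Your proof is correct and follows essentially the same route as the paper: both arguments combine $\dim_{\Fq} K = mt-1$ (Lemma~\ref{lem:K0}) with $\dim_{\Fq}(K \bmod (g)) = mt-1$ (Proposition~\ref{prop:dimKbar}) to conclude that the reduction map $K \to \fract{\Fqm[x]}{(g)}$ is injective, so that its kernel $K \cap g\Fqm[x]$, and hence the smaller set $K \cap g\Fqm[x]_{<et}$, is zero. Your explicit appeal to rank--nullity is just a spelled-out version of the injectivity step the paper states directly.
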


\begin{proof}
  From Lemma~\ref{lem:K0}, the space $K$
  has $\Fq$--dimension $mt-1$ and, from Proposition~\ref{prop:dimKbar},
  $K \mod (g)$ has $\Fq$--dimension $mt-1$ too. 
  Thus, the canonical projection $K \rightarrow \fract{\Fqm[x]}{(g)}$
  is injective and its kernel, which is nothing but $K \cap g\Fqm[x]$
  is equal to zero.
  Consequently, $K \cap g\Fqm[x]_{<et}$ is zero too.
\end{proof}

\begin{prop}
The $\Fq$--space
$K$ of Definition~\ref{def:K0} satisfies
Conditions~(\ref{item:subset}), (\ref{item:cap}) and (\ref{item:dimK0})
of Proposition~\ref{prop:K}.  
\end{prop}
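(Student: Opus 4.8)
The plan is to recognize that this proposition is purely an assembly step: each of the three required conditions has already been verified in the preceding lemmas and corollary, so the proof amounts to citing them in turn. First I would invoke Lemma~\ref{lem:K0}, which asserts simultaneously that $K \subset \ker \tau$ and that $\dim_{\Fq} K = mt-1$; these are exactly Conditions~(\ref{item:subset}) and~(\ref{item:dimK0}) of Proposition~\ref{prop:K}. Then, for Condition~(\ref{item:cap}), namely $K \cap g\Fqm[x]_{<et} = \{0\}$, I would quote Corollary~\ref{cor:cap} directly. Nothing further is needed.

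Since every condition is matched verbatim by an earlier established statement, there is no real obstacle remaining at this point. It is worth recalling, however, where the genuine content lay upstream. The inclusion $K \subset \ker \tau$ rests on the elementary trace identity $\TA\left(\evL{a}^q\right) = \TA\left(\evL{a}\right)$, so that $\TA\left(\evL{a^q - a}\right) = 0$ for every $a$. The dimension count $\dim_{\Fq} K = mt-1$ follows from identifying $\ker \mu$ with the one--dimensional space of constant polynomials lying in $\Fq$, since $a^q = a$ in $\Fqm[x]_{<t}$ forces $a$ to be a scalar in $\Fq$.

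The one subtle ingredient, packaged inside Corollary~\ref{cor:cap}, is the transversality condition~(\ref{item:cap}). The argument there is that Proposition~\ref{prop:dimKbar} establishes $\dim_{\Fq}(K \mod (g)) = mt-1$, matching $\dim_{\Fq} K$; hence the canonical projection $K \to \fract{\Fqm[x]}{(g)}$ is injective, its kernel $K \cap g\Fqm[x]$ vanishes, and a fortiori $K \cap g\Fqm[x]_{<et} = \{0\}$. I regard Proposition~\ref{prop:dimKbar} as the true heart of the matter, since controlling the dimension of $K \mod (g)$ when $g = h^s$ with $s > 1$ requires lifting the congruence $a \equiv a^q \bmod (g)$ through enough Frobenius iterates to collapse the nilpotent part modulo $h^s$. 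Once that is in hand, the present proposition is immediate.
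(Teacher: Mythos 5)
Your proof is correct and matches the paper's own argument exactly: cite Lemma~\ref{lem:K0} for Conditions~(\ref{item:subset}) and~(\ref{item:dimK0}), and Corollary~\ref{cor:cap} for Condition~(\ref{item:cap}). The additional commentary on where the real work lies upstream (in particular Proposition~\ref{prop:dimKbar}) is accurate but not needed for this assembly step.
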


\begin{proof}
Lemma~\ref{lem:K0} yields Condition~(\ref{item:subset}).
Lemma~\ref{lem:K0} also
yields Condition~(\ref{item:dimK0}) and Corollary~\ref{cor:cap}
gives Condition~(\ref{item:cap}).
\end{proof}

Therefore, we proved Theorem~\ref{thm:irred}~(\ref{item:thmeqbis})
and  there remains to prove Theorem~\ref{thm:irred}~(\ref{item:thmeq}).
Thus, from now on, we assume that $g$ has no roots in $\Fqm$
and we will prove the existence of a one--dimensional
$\Fq$--space $T$ satisfying (\ref{item:K}).
The strategy to find such a $T$ is to choose it as 
$T = \langle \lambda a^{e+1} \rangle_{\Fq}$ for some $a\in \Fqm[x]_{<t}$
and some $\lambda \in \Fqm^{\times}$
satisfying  $\trF{\Fqm}{\Fq} (\lambda) = 0$.
Clearly, we have the following statement.
\begin{lem}\label{lem:in}
For any nonzero element $\lambda\in \Fqm$ such that 
$\trF{\Fqm}{\Fq} (\lambda) = 0$ and for any
$a \in \Fqm[x]_{< t}$, we have
$$
\lambda a^{e+1} \in \ker \tau.
$$
\end{lem}

\begin{proof}
We first observe that for all $a\in \Fqm[x]_{<t}$,
  we have $\lambda a^{e+1} \in \Fqm[t]_{<(e+1)t}$,
  which is elementary. We finish the proof with
  \begin{equation*}
  \TA \left( \evL{\lambda a^{e+1}} \right) = 
  \TA \left( \lambda \cdot \NA( \evL{a}) \right) =
   \trF{\Fqm}{\Fq}(\lambda) \cdot \NA\left( \evL{a} \right) 
   =  0.
  \end{equation*}
\end{proof}


The following proposition is the key to conclude the proof of Theorem~\ref{thm:equality}.

\begin{prop}
  \label{prop:startkey}
  Let $r>1$ be an integer and $\Fqmr$ be the degree $r$ extension
  of $\Fqm$. Let $\lambda \in \Fqm^{\times}$ be such that $\tr_{\Fqm/\Fq}(\lambda)=0$.
  Then, there exists $\alpha \in \Fqmr$ such that 
  $$
  \tr_{\Fqmr/\Fq}(\lambda \alpha^{e+1}) \neq 0. 
  $$
\end{prop}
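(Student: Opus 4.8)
The plan is to pass to the multiplicative picture and argue by contradiction. First I would set $d \eqdef e+1 = \frac{q^m-1}{q-1}$ and note that, since $m \mid mr$, we have $d \mid q^m-1 \mid q^{mr}-1$; hence the image $H$ of the $(e+1)$--th power map $\Fqmr^\times \to \Fqmr^\times$, $\alpha \mapsto \alpha^{e+1}$, is precisely the cyclic subgroup of order $\frac{q^{mr}-1}{e+1}$. Since every $\alpha^{e+1}$ lies in $H$ and, conversely, every element of $H$ is of this form, producing $\alpha$ with $\trF{\Fqmr}{\Fq}(\lambda\alpha^{e+1})\neq 0$ is the same as producing $\beta\in H$ with $\trF{\Fqmr}{\Fq}(\lambda\beta)\neq 0$.

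Next I would use that $\beta \mapsto \trF{\Fqmr}{\Fq}(\lambda\beta)$ is a \emph{nonzero} $\Fq$--linear form on $\Fqmr$ (the trace form is nondegenerate and $\lambda\neq 0$), so its kernel is an $\Fq$--hyperplane. It therefore suffices to show that $H$ is not contained in this hyperplane, and for that it is enough to prove that the $\Fq$--linear span of $H$ is all of $\Fqmr$. Here I would invoke the elementary fact that, as $H$ is a multiplicative group containing $1$, its $\Fq$--span is stable under multiplication and contains $\Fq$, hence is an $\Fq$--subalgebra of the field $\Fqmr$ and thus a subfield $\F_{q^k}$ with $k\mid mr$. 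So the whole statement reduces to the claim that $H \not\subseteq \F_{q^k}$ for every \emph{proper} divisor $k$ of $mr$.

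I expect this last, purely numerical, step to be the crux, and it is where the hypothesis $r>1$ is used. As $H$ is cyclic of order $\frac{q^{mr}-1}{e+1}$, an inclusion $H\subseteq \F_{q^k}^\times$ would force $\frac{q^{mr}-1}{e+1}\mid q^k-1$, so in particular $\frac{q^{mr}-1}{e+1}\leq q^k-1$. I would then bound $e+1 = \frac{q^m-1}{q-1}\leq q^m-1$ and $k\leq mr/2$ (the largest proper divisor of $mr$ is at most $mr/2$), obtaining $\frac{q^{mr}-1}{q^m-1}\leq q^{mr/2}$, that is $q^{mr}-1\leq q^{mr/2+m}-q^{mr/2}$. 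This contradicts the elementary inequality $q^{mr}-1 > q^{mr/2+m}-q^{mr/2}$, which holds for every $r\geq 2$: indeed $mr\geq mr/2+m$ exactly when $r\geq 2$, so $q^{mr}-1\geq q^{mr/2+m}-1 > q^{mr/2+m}-q^{mr/2}$. The contradiction rules out every proper $k$, whence the $\Fq$--span of $H$ equals $\Fqmr$, the linear form cannot vanish on all of $H$, and any $\beta=\alpha^{e+1}\in H$ outside the hyperplane gives the required $\alpha$.

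Two remarks I would fold into the write--up. The tight case is $r=2$, where $mr=mr/2+m$ and the inequality becomes $q^{2m}-1>q^{2m}-q^m$, i.e. $q^m>1$, so the boundary is fine. Also, this argument uses only $\lambda\neq 0$ and $r>1$, not the full hypothesis $\trF{\Fqm}{\Fq}(\lambda)=0$; the latter is what makes the statement non--vacuous, because restricting $\alpha$ to the subfield $\Fqm$ gives $\alpha^{e+1}=\NA_{\Fqm/\Fq}(\alpha)\in\Fq$ and hence $\trF{\Fqmr}{\Fq}(\lambda\alpha^{e+1}) = r\,\alpha^{e+1}\,\trF{\Fqm}{\Fq}(\lambda)=0$, so one is genuinely forced to leave $\Fqm$ and exploit $r>1$.
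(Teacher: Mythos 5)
Your proof is correct, but it takes a genuinely different route from the paper's. The paper argues by root counting: it expands $\trF{\Fqmr}{\Fq}(\lambda z^{e+1})$ as an explicit polynomial $Q(z)=\lambda z^{R_0(q)}+\lambda^q z^{R_1(q)}+\cdots+\lambda^{q^{mr-1}}z^{R_{mr-1}(q)}$, reduces the exponents using $z^{q^{mr}}=z$, and observes that for $r\geq 2$ the $R_i(q)$ have pairwise distinct $q$--adic expansions; hence $Q$ is a nonzero polynomial of degree $<q^{mr}$, so it cannot vanish on all of $\Fqmr$. You instead work multiplicatively: the set $H$ of $(e+1)$--th powers is the subgroup of order $(q^{mr}-1)/(e+1)$ (using $e+1\mid q^m-1\mid q^{mr}-1$), its $\Fq$--span is a subfield $\F_{q^k}$ with $k\mid mr$, and your order estimate $|H|\geq \frac{q^{mr}-1}{q^m-1}>q^{mr/2}\geq q^k>q^k-1$ for every proper divisor $k$ of $mr$ forces the span to be all of $\Fqmr$, so the nonzero $\Fq$--linear form $\beta\mapsto \trF{\Fqmr}{\Fq}(\lambda\beta)$ cannot vanish identically on $H$; all steps check out, including the tight case $r=2$ where the inequality reads $q^m+1>q^m$. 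Your approach buys a stronger structural fact --- the $(e+1)$--th powers $\Fq$--linearly span $\Fqmr$ --- and cleanly separates the roles of $\lambda\neq 0$ (nondegeneracy of the trace form) and of $r>1$ (the power subgroup is too large for a proper subfield); it also explains transparently why $r=1$ must be excluded: there $H$ is the image of the norm map, i.e.\ $\Fq^{\times}$, its span is $\Fq$, and the hypothesis $\trF{\Fqm}{\Fq}(\lambda)=0$ says exactly that the form vanishes on that span. The paper's argument, by contrast, is entirely self-contained (no appeal to subgroup orders or the subfield lattice), and its failure mode at $r=1$ --- all the $R_i(q)$ collapse and $Q\equiv 0$ --- is precisely what the paper's subsequent remark invokes to argue that Theorem~\ref{thm:eqbis} cannot be improved when $g$ has roots in $\Fqm$.
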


\begin{proof}
Set 
$
Z\eqdef \left\{ \left. z\in \Fqmr ~\right|~ \tr_{\Fqmr/\Fq}(\lambda z^{e+1}) = 0 \right\}.
$  
Our point is to show that $|Z|<q^{mr}$.
For all $z\in Z$, we have
$$
\begin{aligned}
\tr_{\Fqmr/\Fq}(\lambda z^{e+1}) & = \lambda z^{e+1} + \lambda^q {\left(z^{e+1}\right)}^q +\cdots +\lambda^{q^{mr-1}} {\left(z^{e+1}\right)}^{q^{mr-1}}  \\
   & ={\lambda}{z}^{q^{m-1}+\cdots +1} +  \lambda^q {z}^{q^{m}+\cdots + q}  +
  \cdots  +  {\lambda}^{q^{mr-1}} {z}^{q^{mr + m-2}+\cdots + q^{mr-1}}.
\end{aligned}
$$
  
  \noindent Using the relation ${z}^{q^{mr}} = {z}$, we get
  $$
  \tr_{\Fqmr/\Fq}(\lambda z^{e+1}) = 
   {\lambda} {z}^{R_0(q)} +
  {\lambda}^q {z}^{R_1(q)} +
  \cdots +
  {\lambda}^{q^{mr-1}} {z}^{R_{mr-1}(q)}  
  $$
  where $R_0(q), \ldots , R_{mr-1}(q)$ are integers $<q^{mr}$
  which are sums of $m$ distinct powers of $q$ with exponents
  $<mr$. Namely, 
  \begin{align*}
    R_0(q) & =  q^{m-1} +q^{m-2}  + \cdots  +q +1\\
    R_1(q) & =  q^{m}+q^{m-1} + \cdots +q^2 +q\\
           & \qquad \qquad\vdots   \\
    R_{mr-1}(q) & =  q^{m-2} + \cdots +q +1 + q^{mr-1}.
  \end{align*}
  For all $i$, we have $R_i(q)<q^{mr}$.
  Next, it is not difficult to check that, since by assumption
  $r \geq 2$,
  the $R_i(q)$'s are pairwise distinct since
  they have pairwise distinct $q$--adic expansions.
  Let $Q \in \Fqmr[x]$ be the polynomial
  $$
  Q(x)\eqdef  {\lambda} x^{R_0(q)} +
  {\lambda}^q x^{R_1(q)} +
  \cdots +
  {\lambda}^{q^{mr-1}} x^{R_{mr-1}(q)}.
  $$
  The elements of $Z$ are roots of $Q$
  lying in $\Fqmr$.
  Since the $R_i$'s are pairwise distinct and $\lambda$
  is assumed to be nonzero, the polynomial $Q$ is nonzero.
  In addition, its degree is
  strictly less than $q^{mr}$. Consequently, $Q$ has strictly
  less than $q^{mr}$ roots. Therefore, $|Z|<q^{mr}$, which
  concludes the proof.
\end{proof}

\begin{prop}\label{prop:key}
  Assume that $g$ has no roots in $\Fqm$.
  Let $\lambda $ be a nonzero element of
  $\Fqm$ such that $\trF{\Fqm}{\Fq} (\lambda) = 0$.
  Then, there exists $a\in \Fqm[x]_{<t}$ such that
  \begin{enumerate}[(i)]
  \item\label{item:zero} $K \cap \langle \lambda a^{e+1} \rangle_{\Fq} = \{0\}$;
  \item\label{item:kern} $K \oplus \langle \lambda a^{e+1} \rangle_{\Fq} \subset \ker \tau$;
  \item\label{item:direct} $K \oplus \langle \lambda a^{e+1} \rangle_{\Fq} \oplus g\Fqm [x]_{<et} = \Fqm [x]_{<(e+1)t}$.
  \end{enumerate}
\end{prop}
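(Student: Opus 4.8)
The plan is to prove condition~(\ref{item:direct}) first and then read off (\ref{item:zero}) and (\ref{item:kern}) from it. If the threefold sum in~(\ref{item:direct}) is direct, then in particular $K \cap \langle \lambda a^{e+1}\rangle_{\Fq} = \{0\}$, which is~(\ref{item:zero}). Moreover $K \subset \ker\tau$ by Lemma~\ref{lem:K0} and $\lambda a^{e+1} \in \ker\tau$ by Lemma~\ref{lem:in}, so, $\ker\tau$ being an $\Fq$--subspace, the direct sum $K \oplus \langle \lambda a^{e+1}\rangle_{\Fq}$ is contained in $\ker\tau$, which is~(\ref{item:kern}). Thus the entire proposition reduces to producing a single $a \in \Fqm[x]_{<t}$ realising the decomposition~(\ref{item:direct}).

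To handle~(\ref{item:direct}) I would pass to the quotient by $g$. Let $p \colon \Fqm[x]_{<(e+1)t} \to \fract{\Fqm[x]}{(g)}$ be the reduction map; it is surjective with kernel exactly $g\Fqm[x]_{<et}$, one of the three summands of~(\ref{item:direct}). Consequently the sum in~(\ref{item:direct}) equals $\Fqm[x]_{<(e+1)t}$ if and only if $p(K) + \langle p(\lambda a^{e+1})\rangle_{\Fq} = \fract{\Fqm[x]}{(g)}$; and since the $\Fq$--dimensions $(mt-1)+1+met$ add up to $m(e+1)t = \dim_{\Fq}\Fqm[x]_{<(e+1)t}$, equality with the whole space forces the sum to be direct, so this is genuinely equivalent to~(\ref{item:direct}). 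By Proposition~\ref{prop:dimKbar} we have $\dim_{\Fq} p(K) = \dim_{\Fq}(K \mod (g)) = mt-1$, one short of $\dim_{\Fq}\fract{\Fqm[x]}{(g)} = mt$; hence the condition collapses to the single non--membership requirement $\lambda a^{e+1} \mod (g) \notin K \mod (g)$.

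The crucial step is to secure this non--membership by descending one further level to the field $\fract{\Fqm[x]}{(h)} \cong \Fqmr$. Composing $p$ with the reduction $\fract{\Fqm[x]}{(g)} \to \fract{\Fqm[x]}{(h)}$ (legitimate since $h \mid g$) yields a ring homomorphism $\pi \colon \Fqm[x] \to \Fqmr$ carrying $K \mod (g)$ into $K \mod (h)$; therefore, by contraposition, it is enough to arrange $\lambda a^{e+1} \mod (h) \notin K \mod (h)$. Lemma~\ref{lem:kerTr} identifies $K \mod (h)$ with $\ker \trF{\Fqmr}{\Fq}$, so the requirement becomes $\trF{\Fqmr}{\Fq}(\lambda\, \pi(a)^{e+1}) \neq 0$. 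Because $g$ has no roots in $\Fqm$, its irreducible factor $h$ has degree $r \geq 2$, so Proposition~\ref{prop:startkey} supplies some $\alpha \in \Fqmr$ with $\trF{\Fqmr}{\Fq}(\lambda \alpha^{e+1}) \neq 0$; and since $r \leq t$ the reduction $\Fqm[x]_{<t} \to \Fqmr$ is surjective, so I may choose $a \in \Fqm[x]_{<t}$ with $\pi(a) = \alpha$. This $a$ then satisfies $\lambda a^{e+1}\mod(h)\notin K\mod(h)$, hence $\lambda a^{e+1}\mod(g)\notin K\mod(g)$, which yields~(\ref{item:direct}) and with it~(\ref{item:zero}) and~(\ref{item:kern}).

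The main obstacle is not a computation but the correct bookkeeping of the two successive quotients $\Fqm[x] \to \fract{\Fqm[x]}{(g)} \to \Fqmr$: one must recognise that the opaque three--term direct--sum statement~(\ref{item:direct}) collapses, using only the dimension input of Proposition~\ref{prop:dimKbar}, to a single non--vanishing trace, and that this non--vanishing may be checked already modulo $h$ rather than modulo $g = h^s$. The delicate point is precisely this last descent: $K \mod (g)$ is not easy to describe, whereas $K \mod (h)$ becomes the transparent hyperplane $\ker\trF{\Fqmr}{\Fq}$, and it is the contrapositive direction of the diagram chase that makes passing from $h^s$ down to $h$ legitimate, at which stage Proposition~\ref{prop:startkey} applies verbatim.
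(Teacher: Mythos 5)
Your proof is correct and follows essentially the same route as the paper's: both reduce the problem, via Proposition~\ref{prop:dimKbar} and the quotient maps modulo $g$ and then modulo $h$, to finding $\alpha\in\Fqmr$ with $\trF{\Fqmr}{\Fq}(\lambda\alpha^{e+1})\neq 0$, which Proposition~\ref{prop:startkey} provides since $r\geq 2$, and both conclude with the same dimension count. The only difference is organizational (you derive (\ref{item:zero}) and (\ref{item:kern}) from (\ref{item:direct}), while the paper establishes them in sequence while constructing $a$), not mathematical.
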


\begin{proof}
 Recall that $g$ is assumed to be of the form
 $g=h^s$ where $h$ is irreducible and $s\geq 1$.
 The degree of $h$ is denoted by $r$ so that $t=sr$.
 The case $s=1$ corresponds to $g$ irreducible.


 Since $h$ is assumed to have no roots in $\Fqm$,
 we necessarily have $r=\deg(h) \geq 2$.
 Thus, one can apply 
  Proposition~\ref{prop:startkey}, which asserts the existence of
 $\alpha \in \Fqmr \simeq \fract{\Fqm[x]}{(h)}$
 such that $\lambda \alpha^{e+1} \notin \ker \tr_{\Fqmr/\Fq}$.
 From Lemma~\ref{lem:kerTr}, this asserts the existence of $\alpha \in \fract{\Fqm[x]}{(h)}$
 such that 
 \begin{equation}\label{eq:KmodH}
 \lambda \alpha^{e+1} \notin (K \mod (h)).
 \end{equation}
 Let $\alpha_0$ be a lift of $\alpha$ in $\fract{\Fqm[x]}{(g)}$.
 Then, we clearly have
 \begin{equation}
   \label{eq:KmodG}
   \lambda \alpha_0^{e+1} \notin (K \mod (g)).
 \end{equation}
 Indeed, if we had $\lambda \alpha_0^{e+1} \in (K \mod (g))$, then
 reducing modulo $(h)$ we would contradict~(\ref{eq:KmodH}).
 From Proposition~\ref{prop:dimKbar}, we know that $K \mod (g)$
 has $\Fq$-codimension $1$ in $\fract{\Fqm[x]}{(g)}$ and hence
 (\ref{eq:KmodG})
 yields
 \begin{equation}\label{eq:Modmod}
 (K \mod (g)) \oplus 
 \langle \lambda \alpha_0^{e+1}\rangle_{\Fq} = \fract{\Fqm[x]}{(g)}.
 \end{equation}
 Now, let $a\in \Fqm[x]_{<t}$ be a lift of $\alpha_0$. Here again, we clearly
 have
 $\lambda a^{e+1}\notin K$. This proves (\ref{item:zero}).
 Afterwards, (\ref{item:kern}) is a direct consequence of Lemma~\ref{lem:in}.

 Finally, from Lemma \ref{lem:K0}, we have
 \begin{equation}\label{eq:pouetpouet}
 \dim_{\Fq} K \oplus \langle \lambda a^{e+1}\rangle_{\Fq} = mt.
 \end{equation}
 Since $mt$ is nothing but the $\Fq$--dimension of $\fract{\Fqm[x]}{(g)}$,
 we see that (\ref{eq:Modmod}) together with (\ref{eq:pouetpouet}) prove
 that the space $K \oplus \langle \lambda a^{q+1}\rangle_{\Fq}$
 is isomorphic to its reduction modulo $(g)$, which entails:
 $$
 (K \oplus \langle \lambda a^{e+1}\rangle_{\Fq}) \cap g\Fqm[x] = \{0\}.
 $$
 This leads to (\ref{item:direct}) and
 terminates the proof.
\end{proof}

\noindent {\bf Conclusion.} Proposition~\ref{prop:key} gives the existence
of a vector space $T$ satisfying Condition (\ref{item:K})
of Proposition~\ref{prop:K}. This concludes the proof of Theorem
\ref{thm:irred}~(\ref{item:thmeq}) and hence of Theorem~\ref{thm:equality}.

\begin{rem}
  It is worthwhile 
 noting that the condition 
  ``$g$ has no roots in $\Fqm$'' is necessary to prove
  the result since it is necessary to prove Proposition~\ref{prop:startkey}.
  Indeed, it is easy to see that if $g$ had roots in $\Fqm$,
  then we would need to prove Proposition~\ref{prop:startkey} 
  for $r=1$.
  However, the proof of Proposition~\ref{prop:startkey}
  does not hold for $r=1$, since in that case
  all the $R_i(q)$'s in the proof
  would be equal and the polynomial
  $Q$ would be zero. 
  Using {\sc Magma}
  \cite{magma}, it is easy to compute examples of
  Goppa codes $\gop{L}{g^{e}}$ and $\gop{L}{g^{e+1}}$, which are distinct
  when $g$ has roots in $\Fqm$.
  Thus, one cannot expect better than Theorem~\ref{thm:eqbis}.
  This is illustrated by the examples in \S~\ref{subsec:further}.
\end{rem}


\section{Examples}
\label{sec:examples}

In this section we consider some specific situations to illustrate our results.
We first focus on the case of quadratic extensions, that is to say $m=2$.
Next, we illustrate Theorem \ref{thm:irred} by considering such codes
with a polynomial $g$ of degree $1$ and an extension degree 
that is equal to $3$.

\subsection{Wild Goppa codes from quadratic extensions}
In this subsection, the extension degree $m$ will be equal to $2$.
In this particular situation, our Theorem~\ref{thm:equality} asserts that
for a squarefree polynomial $g$ with no roots in $\Fqq$, we have
$$
\gop{L}{g^{q-1}} = \gop{L}{g^q} = \gop{L}{g^{q+1}}.
$$
Therefore, the minimum distance of this code is bounded below
by $\deg(g)(q+1)+1$ instead of $\deg(g)q+1$, which was its designed distance 
up to now. Another striking fact is that its dimension is also larger than
the lower bound $n-2\deg(g)(q-1)$.
This is a consequence of the results of \S~\ref{sec:proof1}, which assert
that such a code 
is diagonally equivalent to a subfield subcode of a Reed--Solomon code or a shortening of it.
The following statement yields a lower bound for the dimension of these
wild Goppa codes.
\begin{thm}
  \label{thm:ttmoins2}
  Let $g\in \Fqq[x]$ be a polynomial of degree
  $t\geq 2$ with no roots in
  $\Fqq$ and $L$ be a support of length $n$, then
  $$
  \dim_{\Fq} \gop{L}{g^{q+1}} \geq n - 2 t (q+1) + t(t+2).
  $$
  In addition, the inequality is an equality when $L$ is a full-support
  or a support of length $q^2-1$.
\end{thm}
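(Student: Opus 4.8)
The plan is to compute the dimension exactly for a full support and then descend to an arbitrary $L$ by restriction. Since $m=2$ we have $e=q$, so the Goppa polynomial is $g^{q+1}$; write $d\eqdef t(q+1)$ (the interesting regime being $2\le t\le q-1$, where $k=q^{2}-d\geq 1$). First I would rewrite the dual as a trace code exactly as in the proof of Proposition~\ref{prop:dualRef}. By Lemma~\ref{lem:evgoppa}, $\gop{L}{g^{q+1}}^{\perp}=\TA(C^{\perp})$ with $C^{\perp}=\{\evL{f}/\evL{g^{q+1}} : \deg f<d\}$. As $g$ has no roots in $\Fqq$, the vector $\evL{g^{q+1}}=\NA(\evL{g})$ has all its entries in $\Fq^{\times}$; dividing each coordinate by it is a diagonal $\Fq$--isometry commuting with the trace, so $\gop{L}{g^{q+1}}^{\perp}$ is isometric to $\TA(\ev_{L}(\Fqq[x]_{<d}))$. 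Hence $\dim_{\Fq}\gop{L}{g^{q+1}}=n-\dim_{\Fq}\TA(\ev_{L}(\Fqq[x]_{<d}))$. (Alternatively, for a full support one may use Corollary~\ref{cor:equivalence} to identify the code with $RS_{k}(L_0)_{|\Fq}$ and then invoke the extended--BCH dimension formula of \cite[Theorem 4.4]{HMO'S}; the count below is the same fact made explicit.)

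The core step is to evaluate this trace code dimension for the full support $L_0=\Fqq$. Extending scalars to $\Fqq$, the space becomes the $\Fqq$--span of the functions $\alpha\mapsto\tr_{\Fqq/\Fq}(\gamma\alpha^{j})=\gamma\alpha^{j}+\gamma^{q}\alpha^{jq}$ for $0\le j<d$ and $\gamma\in\Fqq$. Taking $\gamma=1$ together with some $\gamma\notin\Fq$ gives an invertible $2\times 2$ system, so for each fixed $j$ these functions $\Fqq$--span $\langle\alpha^{j},\alpha^{jq}\rangle_{\Fqq}$ (of dimension $2$, or $1$ when $jq\equiv j$); reducing $jq$ modulo $q^{2}-1$ is legitimate here, also at $\alpha=0$, since for $1\le j<d$ the residue stays in $[1,q^{2}-2]$ as soon as $t\le q-1$. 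Because distinct monomials of degree $\le q^{2}-1$ are linearly independent as functions on $\Fqq$, the total span is $\langle\alpha^{i}:i\in E\rangle_{\Fqq}$ and the dimension equals $|E|$, where $E\eqdef\{0\}\cup\bigcup_{1\le j<d}\{\,j,\ jq \bmod (q^{2}-1)\,\}$.

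It remains to count $|E|$, and this is the only delicate point. Writing each exponent through its two $q$--adic digits, multiplication by $q$ modulo $q^{2}-1$ is the digit swap $(j_{1},j_{0})\mapsto(j_{0},j_{1})$, and the interval $1\le j\le d-1=t(q+1)-1$ is described in the digit grid $[0,q-1]^{2}$ by $\{j_{1}\le t-1\}\cup\{j_{1}=t,\ j_{0}\le t-1\}$. Adding the swapped region, the expectation is that the two boundary strips produced by the swaps are absorbed into the main blocks, precisely because $t\le q-1$, leaving the ``L--shape'' $\{j_{1}\le t-1\}\cup\{j_{0}\le t-1\}$; inclusion--exclusion then yields $|E|=tq+tq-t^{2}=t(2q-t)$. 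I would double check here that the pair $(q-1,q-1)$, the only one not representing an exponent $\le q^{2}-2$, is excluded, which again follows from $t\le q-1$. This gives $\dim_{\Fq}\gop{L_0}{g^{q+1}}=q^{2}-t(2q-t)=(q-t)^{2}=q^{2}-2t(q+1)+t(t+2)$, the full--support equality.

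Finally, for an arbitrary support $L$ the code $\TA(\ev_{L}(\Fqq[x]_{<d}))$ is the restriction to $L$ of the corresponding code on $L_0$, so its dimension is $\le t(2q-t)$, whence $\dim_{\Fq}\gop{L}{g^{q+1}}\ge n-t(2q-t)=n-2t(q+1)+t(t+2)$ (this is also the shortening bound of Lemma~\ref{lem:supports}). For equality I note that $E\subseteq[0,q^{2}-2]$, so the monomials $\alpha^{i}$, $i\in E$, remain linearly independent even after deleting the coordinate $\alpha=0$; thus the trace code keeps dimension $t(2q-t)$ both on $L_0=\Fqq$ ($n=q^{2}$) and on $\Fqq^{\times}$ ($n=q^{2}-1$), yielding equality in these two cases. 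Any other support of length $q^{2}-1$ is $\Fqq\setminus\{\beta\}$, and the substitution $x\mapsto x+\beta$ fixes $\Fqq$, preserves $\deg g$ and the ``no roots in $\Fqq$'' hypothesis, and carries the two codes onto one another, so the value $(q-t)^{2}-1$ is independent of the deleted point.
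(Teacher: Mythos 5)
Your proof is correct, and it takes a genuinely different --- more self-contained --- route than the paper's. The paper settles the full-support case by citing two results of \cite{HMO'S}: the diagonal equivalence of Corollary~\ref{cor:equivalence}, identifying $\gop{L_0}{g^{q+1}}$ with the subfield subcode of a full-support Reed--Solomon code, and the cyclotomic-class dimension formula of \cite[Theorem 4.4]{HMO'S}, whose sum it then evaluates ($t$ singleton classes $\{i(q+1)\}$ plus $\binom{t+1}{2}$ two-element classes, giving the correction $t+2\binom{t+1}{2}=t(t+2)$). You instead re-derive the needed dimension count from first principles: dualize via Lemma~\ref{lem:evgoppa}, pull the denominator $\evL{g^{q+1}}=\NA\left(\evL{g}\right)\in(\Fq^{\times})^{n}$ out of the trace by a diagonal isometry (the same manipulation the paper performs in Proposition~\ref{prop:dualRef} for its main theorems, though not for this statement), extend scalars to $\Fqq$, and count the exponent set $E$. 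Your L-shape count $|E|=2qt-t^{2}$ in the digit grid is precisely the paper's cyclotomic-class count in different notation, and your passage to arbitrary $L$ by puncturing the dual trace code is the dual form of the paper's shortening argument (Lemma~\ref{lem:supports}). What you gain is independence from \cite{HMO'S} and a more explicit treatment of the equality cases: the paper disposes of full-support equality by deferring to the equality discussion in \cite{HMO'S} and is silent on how the length-$(q^{2}-1)$ case follows, whereas you prove both directly (monomial independence on $q^{2}-1$ points, plus the translation $x\mapsto x+\beta$ to move an arbitrary deleted point to $0$). What the paper gains is brevity and a method that scales immediately to $m=3$ (Theorem~\ref{thm:dimcub}), where your hand count would become unwieldy. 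One caveat common to both arguments: they require $t\leq q-1$ (your ``interesting regime''), a hypothesis absent from the statement of the theorem; this is not a defect of your proof, since the statement actually fails without some such bound --- for $t\geq q+1$ and full support the code is zero while the claimed lower bound is $(q-t)^{2}\geq 1$ --- so flagging the regime explicitly, as you do, is the right thing to do.
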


\begin{proof}
  First, let us assume that $L$ is a full support, {\it i.e.} $n=q^2$.
  From Corollary~\ref{cor:equivalence}, the Goppa code $\gop{L}{g^{q+1}}$
  is $\Fq$--equivalent to the subfield subcode of $RS_{q^2-t(q+1)}(L)$.
  The dimension of such a code is boun\-ded below in \cite{HMO'S}.
  This bound concerns codes supported by $\Fqm\setminus \{0\}$ and its
  shortenings. However, the case of a full support can easily be deduced
  from that of the support $\Fqm \setminus\{0\}$, since the latter
  is nothing but the shortening of the former at one position.

  Before stating this lower bound, let us recall some notions
  and notation
  on cyclotomic classes.
  We call a {\em cyclotomic class} an orbit of $\fract{\Z}{(q^2-1)\Z}$
  for the multiplication by $q$.
  One sees easily that, in this situation, cyclotomic classes
  contain either one or two elements. For instance $\{0\}$, $\{q+1\}$
  or $\{1,q\}$ are cyclotomic classes.
  From now, on, we denote by $B$ the set of smallest elements in the
  cyclotomic classes. For all $b\in B$, we denote by $I_b$ the corresponding
  class and by $n_b$ the cardinality of $I_b$. In addition, we denote by $A$ the set
  $\{0, \ldots , t(q+1)-1\}$.
  From \cite[Theorem 4.4]{HMO'S} (applied to $m=2$), we have
  \begin{equation}\label{eq:lowb}
  \dim_{\Fq}(RS_{q^2-t(q+1)}(L))_{|\F_q} = q^2 - 2t(q+1) + \sum_{b \in B\cap A}
  (2(|I_b \cap A| -1)+2-n_b).
  \end{equation}
  Actually, \cite[Theorem 4.4]{HMO'S} is an inequality, 
  but below this statement in \cite{HMO'S},
  the equality cases are discussed and equality holds
  always for a full support.

  The sum in~(\ref{eq:lowb}) involves two kinds of cyclotomic classes,
  namely:
  \begin{itemize}
    \item the classes $I_b$ with $I_b \subset A$ and $n_b=1$.
      These classes are $\{0\}, \{q+1\}, \ldots , \{(t-1)(q+1)\}$.
      Their number is equal to $t$.
    \item the classes $I_b$ with $I_b \subset A$ and $n_b =2$.
      These classes are of the form $\{a_0+a_1q, a_0q+a_1\}$ for 
      $(a_0, a_1) \in \{0, \ldots, t\}^2$ and $a_0 \neq a_1$.
      The number of such classes is ${t+1 \choose 2}$.
  \end{itemize}
  It is easy to observe that the other cyclotomic classes have no contribution
  in the sum in~(\ref{eq:lowb}). Consequently, we get
  $$
  \begin{aligned}
    \dim_{\Fq}(RS_{q^2-t(q+1)}(L))_{|\Fq} & = q^2 - 2t(q+1) + 2{t+1 \choose 2} + t\\
       & =q^2 -2t(q+1) +t(t+2).
  \end{aligned}
  $$
  
  Now, if $L$ is an arbitrary support, then, from
  Lemma
  \ref{lem:supports}, the code $\gop{L}{g^{q+1}}$
  is the shortening of a full support Goppa code.
  Hence it is $\Fq$--equivalent to the shortening
  of $RS_{q^2-r(q+1)}(L_0)_{|\Fq}$, where $L_0$ denotes a full support.
  Therefore, the
  general case results straightforwardly
  from the full support case.
\end{proof}

\begin{rem}
  If we reconsider the wild Goppa code $\gop{L}{g^{q-1}}$ whose designed
dimension is $n - 2 t (q-1)$, we see that if $t\geq 3$ then the actual
dimension is larger and the difference between the actual and the designed 
dimension is $t(t-2)$. It is quadratic in $t$.
\end{rem}

Table~\ref{tab:wild:m=2} lists the parameters of some of these codes.
It turns out that all these parameters reach those
of the best known codes listed in~\cite{grassl}.
\begin{table}[h] 
\centering
\begin{tabular}{@{}ccccc@{}}
  \toprule
    & $q= 5$ & $q= 7$ & $q= 8$ & $q= 9$ \\
  \midrule
  $\deg(g) = 3$ & [25 ,  4 , $\geq 19$] & [49,  16, $\geq 25$]  & [64,  25, $\geq 28$]  &  [81,  36, $\geq 31$]  \\
  $\deg(g) = 4$ & - & [49,  9, $\geq 33$]  & [64,  16, $\geq  37$]  &  [81,  25, $\geq 41$]\\
  $\deg(g) = 5$ & - & [49,  4, $\geq 41$ ]  & [64,  9, $\geq 46$]  &  [81,  16, $\geq 51$]\\
  $\deg(g) = 6$ & - & -  & [64,  4, $\geq 55$]  &  [81,  9, $\geq 61$]\\
  $\deg(g) = 7$ & - & -  & - &  [81,  4, $\geq 71$]\\
  \bottomrule
\end{tabular}
\caption{Parameters of Wild Goppa codes over 
a 
quadratic extension ($m =2$).} \label{tab:wild:m=2}
\end{table}

\subsection{Further examples}\label{subsec:further}
Now, let us consider the case of cubic extensions, that is $m=3$ and
the particular case of a polynomial $g$ of degree $1$.
First, let us state a general result on the dimension
of such codes from cubic extensions.
\begin{thm}\label{thm:dimcub}
  Let $g\in \F_{q^3}[x]$ be a polynomial of degree $t$ and $L \in \F_{q^3}^n$
  be a support of length $n$ avoiding the roots of $g$. Then,
  $$
  \dim_{\Fq} \gop{L}{g^{q^2+q+1}} \geq n - 3t(q^2+q+1) +2t+2t(t+1)(t+2)
  +3(q-1-t)t(t+1)
  $$
  and equality holds if $L$ is a full support or has length $q^3-1$.
\end{thm}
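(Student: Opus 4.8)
The plan is to mirror the proof of Theorem~\ref{thm:ttmoins2}, now for a cubic extension. First I would reduce to a full support: by Lemma~\ref{lem:supports} an arbitrary $\gop{L}{g^{q^2+q+1}}$ is a shortening of the corresponding full-support code, so it suffices to treat $n=q^3$, the length-$(q^3-1)$ case being the shortening at a single position. For a full support $L_0$, Corollary~\ref{cor:equivalence} provides the diagonal equivalence $\gop{L_0}{g^{q^2+q+1}} \sim_{\Fq} RS_k(L_0)_{|\Fq}$ with $k = q^3 - t(q^2+q+1)$, which reduces the problem to computing the dimension of a subfield subcode of a Reed--Solomon code, i.e. of an extended BCH code.

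Next I would invoke the cyclotomic-class dimension formula of \cite[Theorem 4.4]{HMO'S} specialised to $m=3$. Writing $A = \{0,\ldots,t(q^2+q+1)-1\}$, letting $B$ index the cyclotomic classes $I_b$ (orbits of $\Z/(q^3-1)\Z$ under multiplication by $q$) and $n_b = |I_b|$, it reads
\begin{equation*}
\dim_{\Fq}\bigl(RS_k(L_0)\bigr)_{|\Fq} = q^3 - 3t(q^2+q+1) + \sum_{b\in B\cap A}\bigl(3(|I_b\cap A|-1) + 3 - n_b\bigr),
\end{equation*}
with equality for a full support. The entire problem is to evaluate the correction sum. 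Since $3$ is prime, $n_b\in\{1,3\}$: the size-$1$ classes are the multiples of $q^2+q+1$, and every other class has size $3$. Computing the summand $3(|I_b\cap A|-1)+3-n_b$ in each case, a size-$1$ class inside $A$ contributes $2$, a size-$3$ class entirely inside $A$ contributes $6$, a size-$3$ class meeting $A$ in exactly two points contributes $3$, and any class meeting $A$ in one point or none contributes $0$.

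The combinatorial bookkeeping I would organise through the base-$q$ expansion: an element $a_0 + a_1q + a_2 q^2$ lies in $A$ exactly when its digit string $(a_2,a_1,a_0)$ is lexicographically smaller than $(t,t,t)$, and multiplication by $q$ cyclically rotates the digits. The size-$1$ classes $\{j(q^2+q+1)\}$ inside $A$ are those with $0\le j\le t-1$, giving $t$ classes and total $2t$. A size-$3$ class lies entirely in $A$ iff all three digits are $\le t$ and not all equal; there are $(t+1)^3-(t+1) = t(t+1)(t+2)$ such ordered triples, hence $t(t+1)(t+2)/3$ classes, for a contribution $2t(t+1)(t+2)$.

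The delicate step — and the one I expect to be the main obstacle, since it has no analogue in the $m=2$ case, where the size-$2$ classes met $A$ in $0$, $1$ or $2$ points and the intermediate case contributed nothing — is counting the size-$3$ classes meeting $A$ in exactly two points. Such a class has a unique rotation $(d_2,d_1,d_0)\geq_{\mathrm{lex}}(t,t,t)$ with its two other rotations $<_{\mathrm{lex}}(t,t,t)$. A short case analysis on $(d_2,d_1,d_0)$ shows that if $d_2=t$ then the rotation is bad only when $d_1\geq t$, which forces a \emph{second} rotation out of $A$; hence having exactly one bad rotation requires $d_2>t$, and imposing that the two shifts $(d_1,d_0,d_2)$ and $(d_0,d_2,d_1)$ lie in $A$ then forces $d_0<t$ and $d_1\le t$. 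This yields $(q-1-t)(t+1)t$ classes and a contribution $3(q-1-t)t(t+1)$. Summing the three contributions with the leading term $q^3-3t(q^2+q+1)$ gives the stated dimension; the full-support equality case of \cite[Theorem 4.4]{HMO'S} upgrades the inequality to an equality, and the general support then follows from the shortening reduction exactly as in the proof of Theorem~\ref{thm:ttmoins2}.
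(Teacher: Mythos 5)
Your proof is correct and takes essentially the same approach as the paper: reduction to the full-support case via shortening and Corollary~\ref{cor:equivalence}, then the $m=3$ specialisation of the cyclotomic-class formula of \cite[Theorem 4.4]{HMO'S}, with the same three-way classification of classes and the same counts ($2t$, $2t(t+1)(t+2)$, and $3(q-1-t)t(t+1)$). Your lexicographic case analysis identifying the size-$3$ classes that meet $A$ in exactly two points simply makes explicit what the paper dismisses with ``it can be checked that no other cyclotomic class contributes.''
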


\begin{proof}
  We use the very same techniques as in the proof of
  Theorem~\ref{thm:ttmoins2}
  and use the same notation with the only change that
  here cyclotomic classes are subsets of $\fract{\Z}{(q^3-1)\Z}$
  and $A = \{0, \ldots , t(q^2+q+1)-1\}$.
  Here \cite[Theorem 4.4]{HMO'S} asserts
  that 
  \begin{equation}
    \label{eq:cubiccase}
  \dim_{\Fq} \gop{L}{g^{q^2+q+1}} \geq n - 3t(q^2+q+1) + \sum_{b\in B\cap A}
  (m(|I_b \cap A|-1)+m-n_b).
  \end{equation}
  We consider three kinds of cyclotomic classes.
  \begin{itemize}
  \item The classes $\{0\}, \{q^2+q+1\}, \ldots , \{(t-1)(q^2+q+1)\}$.
    Their number is $t$, they satisfy $n_b=1$ and $|I_b \cap A|=1$.
    They yield a term $2t$
    in the sum in the second member of~(\ref{eq:cubiccase}).
  \item The classes $\left\{\{a_0 + a_1 q + a_2 q^2\},\{a_2 + a_0 q + a_1 q^2\},
      \{a_1 + a_2 q + a_0 q^2\}
      \right\}$ for $a_i \leq t$
    and at least one of the $a_i$'s is distinct from the others.
    They satisfy $n_b = 3$, $|I_b \cap A| = 3$ and their number is
    $\frac{(t+1)^3-(t+1)}{3}$. They provide a term $2t(t+1)(t+2)$
    in the sum in the second member of~(\ref{eq:cubiccase}).
  \item The classes $\left\{\{a_0 + a_1 q + a_2 q^2\},\{a_2 + a_0 q + a_1 q^2\},
      \{a_1 + a_2 q + a_0 q^2\} \right\}$
    for $t<a_2\leq q-1$ and $0\leq a_0 < t$ and $0\leq a_1 \leq t$.
    They satisfy $n_b = 3$ and $|I_b \cap A| = 2$. Their number is
    $(q-1-t)((t+1)^2-(t+1))$ and they provide a term 
    $3(q-1-t)t(t+1)$
    in the sum in the second member of~(\ref{eq:cubiccase}).
  \end{itemize}
  It can be checked that no other cyclotomic class contributes in the sum
  in~(\ref{eq:cubiccase}) and combining the three above items, we get the
  result.
\end{proof}

Now, let us focus on the case of a polynomial $g$ of degree $1$.
For the support $L$ we take a vector of length $q^3-1$ listing
every element of $\F_{q^3}$ but the single root
of $g$.
Here, Theorem~\ref{thm:dimcub} gives
\begin{equation}\label{eq:norm}
\dim_{\Fq} \gop{L}{g^{q^2+q+1}} \geq (q^3-1) - 3 (q^2+q+1) +14 +6(q-2).
\end{equation}
On the other hand, the classical bound for alternant codes yields
\begin{equation}\label{eq:normdivg}
\dim_{\Fq} \gop{L}{g^{q^2+q}} \geq (q^3-1) - 3 (q^2+q).
\end{equation}
Obviously, since we have the inclusion $\gop{L}{g^{q^2+q+1}} \subset
\gop{L}{g^{q^2+q}}$, and comparing the bounds, we see that (\ref{eq:normdivg})
is far from being sharp and that (\ref{eq:norm}) gives a better lower bound
for the dimension of the code $\gop{L}{g^{q^2+q}}$.
In addition, Theorem~\ref{thm:eqbis} asserts that $\gop{L}{g^{q^2+q}}$
might have one dimension more than $\gop{L}{g^{q^2+q+1}}$.
This is what happens in general.
In Table~\ref{tab:g=x},
we give the parameters of such Goppa codes when the polynomial $g$ is $x$.
The true dimensions have been verified with {\sc magma} \cite{magma}.
They coincide with the above discussed lower bounds.
\begin{table}[h]
\centering
\begin{tabular}{@{}ccccc@{}}
\toprule
 & $q=4$ & $q =5$ & $q=7$ & $q=8$  \\
\midrule
$\gop{L}{x^{q^2+q+1}}$ & [63, 26, $\geq 22$] & [124, 63, $\geq 32$] & 
[342, 215, $\geq 58$] &  [511, 342, $\geq 74$] \\  
$\gop{L}{x^{q^2+q}}$ & [63, 27, $\geq 21$] & [124, 64, $\geq 31$] &  
[342, 216, $\geq 57$] &  [511, 343, $\geq 73$] \\
\bottomrule
\end{tabular}
\caption{Parameters of wild Goppa codes with $g = x$ and $m=3$.} \label{tab:g=x}
\end{table}

\section*{Conclusion}
We proved two new identities relating so--called wild Goppa codes.
The first one asserts that if $g$ is a polynomial with no
roots in $\Fqm$, then $\gop{L}{g^{q^{m-1}+\cdots +q^2+q}} =
\gop{L}{g^{q^{m-1}+\cdots +q^2+q+1}}$.
The second one asserts that if $g$ has roots in $\Fqm$ then, the equality
fails but the difference of the $\Fq$--dimensions of the two codes is bounded
above by the number of distinct roots of $g$ in $\Fqm$.
The corresponding codes are of particular interest since they turn out
to be extended or shortened BCH codes and have a very high 
dimension compared to the designed dimension of alternant codes.

It should be pointed out that the proofs of our main results
in the present article involve duals of Goppa codes.
Getting  direct proofs of such identities involving only the rational
fractions used to define Goppa codes would be of interest.

\section*{Acknowledgements}
 The authors express their deep gratitude to Sergey Bezzateev for his
 careful reading and its relevant comments on this work.

\bibliographystyle{abbrv}
\bibliography{biblio}

\end{document}